\documentclass{article}
\usepackage{pagecolor, setspace, amsmath, amssymb, amsthm}
\usepackage[colorlinks = true]{hyperref}
\usepackage[margin = 2.5cm]{geometry}

\usepackage[english]{babel}
\usepackage[utf8x]{inputenc}
\usepackage{amsmath}
\usepackage{graphicx}
\usepackage[usenames,dvipsnames]{pstricks}
\usepackage{epsfig}
\usepackage{pst-grad} 
\usepackage{pst-plot} 
\usepackage{epstopdf}
\usepackage{subcaption}
\usepackage{tikz}
\usetikzlibrary{calc}
\usetikzlibrary{fit}
\newtheorem{theorem}{Theorem}[section]
\newtheorem{lemma}[theorem]{Lemma}

\newtheorem{corollary}[theorem]{Corollary}
\newtheorem{procedure}{Procedure}
\newtheorem{definition}{Definition}
\newtheorem{example}{Example}

\title{On Characteristic and Permanent Polynomials of a Matrix}
\author{Ranveer Singh \thanks{Center for System Science, Indian Institute of Technology, Jodhpur. email: \texttt{pg201283008@iitj.ac.in}}\\ R. B. Bapat \thanks{Stat-Math Unit, Indian Statistical Institute, Delhi, 7-SJSS Marg, New Delhi - 110 016. email: \texttt{rbb@isid.ac.in}}}


\begin{document}
        \maketitle
%

\begin{abstract}
There is a digraph corresponding to every square matrix over $\mathbb{C}$. We generate a recurrence relation using the Laplace expansion to calculate the characteristic, and permanent polynomials of a square matrix. Solving this recurrence relation, we found that the characteristic, and permanent polynomials can be calculated in terms of characteristic, and permanent polynomials of some specific induced subdigraphs of blocks in the digraph, respectively. Interestingly, these induced subdigraphs are vertex-disjoint and they partition the digraph. Similar to the characteristic, and permanent polynomials; the determinant, and permanent can also be calculated.  Therefore, this article provides a combinatorial meaning of these useful quantities of the matrix theory.  We conclude this article with a number of open problems which may be attempted for further research in this direction.
\end{abstract}

\emph{Keywords.} Determinant; Permanent; Block; Block Graph.\\
    \emph{AMS Subject Classifications.} 15A15, 05C20, 68R10.
    
%


%


\section{Introduction}
The problem of finding determinant, and permanent of  a matrix has been well studied in literature \cite{abdollahi2012determinants,helton2009determinant,bibak2013determinant,pragel2012determinants,huang2012determinant,bibak2013determinants,bapat2014adjacency,hwang2003permanents,harary1969determinants,farrell2000permanents,wanless2005permanents,minc1984permanents}. Methods have also been developed to calculate determinant of a matrix by utilizing digraph representations of the corresponding matrix \cite{harary1962determinant,greenman1976graphs}. The well-known problem, proposed by Collatz and Sinogowitz in 1957, is to characterize graphs with positive nullity \cite{von1957spektren,bibak2013determinant}. The zero determinant of the adjacency matrix of a graph ensures its positive nullity. Nullity of graphs is applicable in various branches of science, in particular, quantum chemistry, Hückel molecular orbital theory \cite{lee1994chemical,gutman2011nullity}, and social network theory \cite{leskovec2010signed}. The permanent of a square matrix has significant graph theoretic interpretations. It is equivalent to find out the number of cycle-covers in the directed graph corresponding to its adjacency matrix. Also the permanent is equivalent to the number of the perfect matching in the bipartite graph corresponding to its biadjacency matrix. Theory of permanents provides an effective tool in dealing with order statistics corresponding to random variables which are independent but possibly nonidentically distributed  \cite{bapat1989order}. Computing permanent of a matrix is a ``\#P-hard problem" which can not be done in polynomial time unless $P^{\# P}=P$, and in particular, $P=NP$ \cite{valiant1979complexity,wei2010matrix}. The characteristic polynomial of a square matrix $A$, of order $n$, is given by determinant of matrix $(A-\lambda I)$, where $I$ is an identity matrix of order $n$. We denote characteristic polynomial, $\det(A-\lambda I)$ by $\phi(A)$. Similarly,  the permanent polynomial of $A$ is given by permanent of matrix $(A-\lambda I)$. We denote permanent polynomial, per$(A-\lambda I)$ by $\psi(A)$.

    Now, we give some preliminaries and notations that are used in this article. We begin with a standard theorem on the Laplace expansion of determinant and permanent of a square matrix $A$. 
    
    \begin{theorem} \label{Godsil}
    Let $A_{S,T}$ denote the submatrix of matrix $A$ of order $n$ with rows indexed by elements in $S$, and columns by the elements in $T$. Let $A'_{S,T}$ denote the submatrix of $A$ with rows indexed by elements not in $S$, and columns by the elements not in $T$,$$\det (A)=\sum_{T}(-1)^{w(S,T)}\det (A_{S,T})\det (A'_{S,T}).$$ $$\emph{per}(A)=\sum_{T}\emph{per} (A_{S,T})\emph{per} (A'_{S,T}).$$  Here, $S$ is a fixed $k$-subset of the rows of $A$;  $T$ runs over all $k$-subsets of the columns of $A$, for $k<n$. Also, $w(S,T)=|S|+|T|$, where $|S|$, and $|T|$ represents sum of all the elements in $S$, and $T$, respectively. 
    \end{theorem}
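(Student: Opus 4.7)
The plan is to derive both identities directly from the Leibniz definitions
\[
\det(A) = \sum_{\sigma \in S_n} \mathrm{sgn}(\sigma) \prod_{i=1}^{n} a_{i,\sigma(i)}, \qquad \mathrm{per}(A) = \sum_{\sigma \in S_n} \prod_{i=1}^{n} a_{i,\sigma(i)},
\]
and to partition the outer sum according to the image set $T := \sigma(S)$, which is necessarily a $k$-subset of columns. Each $\sigma \in S_n$ restricts uniquely to a bijection $\alpha : S \to T$ together with a bijection $\beta : S^c \to T^c$, and conversely every such pair $(\alpha,\beta)$ assembles back into a single $\sigma$. This regrouping converts $\sum_{\sigma \in S_n}$ into $\sum_T \sum_\alpha \sum_\beta$.

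For the permanent, the argument finishes at once: the monomial $\prod_i a_{i,\sigma(i)}$ factors as $\prod_{i \in S} a_{i,\alpha(i)} \cdot \prod_{j \in S^c} a_{j,\beta(j)}$, and the inner double sum over $\alpha,\beta$ is exactly $\mathrm{per}(A_{S,T}) \cdot \mathrm{per}(A'_{S,T})$. Summing over $T$ gives the stated identity.

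The substantive step, and the main obstacle, is the sign bookkeeping for the determinant. Here I would factor $\sigma = \pi_T^{-1} \circ \bar\sigma \circ \pi_S$, where $\pi_S$ is the order-preserving bijection of $[n]$ sending $S$ to $\{1,\dots,k\}$ and $S^c$ to $\{k+1,\dots,n\}$, similarly for $\pi_T$, and $\bar\sigma$ is a block-preserving permutation. A direct inversion count produces $\mathrm{sgn}(\pi_S) = (-1)^{\sum_{s \in S} s - \binom{k+1}{2}}$ and the analogous expression for $\pi_T$; since $2\binom{k+1}{2}$ is even, these combine into the single factor $(-1)^{|S|+|T|} = (-1)^{w(S,T)}$. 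Meanwhile $\mathrm{sgn}(\bar\sigma)$ splits as $\mathrm{sgn}(\tilde\alpha)\,\mathrm{sgn}(\tilde\beta)$, where $\tilde\alpha,\tilde\beta$ are the permutations of $\{1,\dots,k\}$ and $\{1,\dots,n-k\}$ obtained by relabeling $\alpha,\beta$ through the order-preserving bijections.

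Assembling the pieces, the inner sum $\sum_\alpha \mathrm{sgn}(\tilde\alpha) \prod_{i \in S} a_{i,\alpha(i)}$ equals $\det(A_{S,T})$ and the corresponding sum over $\beta$ equals $\det(A'_{S,T})$, so the total collapses to $\sum_T (-1)^{w(S,T)} \det(A_{S,T}) \det(A'_{S,T})$, as claimed. Once the sign factorization is established, the rest of the argument is purely a reindexing exercise; I would expect the write-up to spend essentially all of its effort on justifying the inversion count for $\pi_S$ and $\pi_T$.
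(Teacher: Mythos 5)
The paper does not prove this statement at all: it is quoted as a standard theorem (the generalized Laplace expansion), so there is no proof of record to compare against. Your argument is the standard textbook derivation and is correct: the regrouping of the Leibniz sum by $T=\sigma(S)$ is valid, the permanent case is immediate, and your inversion count $\mathrm{sgn}(\pi_S)=(-1)^{\sum_{s\in S}s-\binom{k+1}{2}}$ is right, so the two correction signs do combine to $(-1)^{w(S,T)}$ with $w(S,T)$ interpreted, as the paper intends, as the sum of the elements of $S$ plus the sum of the elements of $T$.
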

    
    A digraph $G = \Big(V(G), E(G)\Big)$ is a collection of a vertex set $V(G)$, and an edge set $E(G)\subseteq V(G) \times V(G)$. An edge $(u,u)$ is called a loop at the vertex $u$. A simple graph is a special case of a digraph, where $E(G) = \{(u,v): u \neq v \}$; and if $(u,v) \in E(G)$, then $(v,u) \in E(G)$. A weighted digraph is a digraph  equipped with  a weight function $f: E(G) \rightarrow \mathbb{C}$.
    If $V(G) = \emptyset$ then, the digraph $G$ is called a null graph. A subdigraph of $G$ is a digraph $H$, such that, $V(H) \subseteq V(G)$ and $E(H) \subseteq E(G)$. The subdigraph $H$ is an induced subdigraph of $G$ if $u, v \in V(H)$ and $(u,v) \in E(G)$ indicate $(u,v) \in E(H)$. Two subdigraphs $H_1$, and $H_2$ are called vertex-disjoint subdigraphs if $V(H_1)\cap V(H_2)=\emptyset$. A path of length $k$ between two vertices $v_1$, and $v_k$ is a sequence of distinct vertices $v_{1}, v_{2}, \hdots, v_{k-1}, v_{k}$, such that, for all $i=1,2, \hdots, k-1$,  either $(v_{i}, v_{i +1}) \in E(G)$ or $(v_{i + 1}, v_i) \in E(G)$. We call a digraph $G$ be connected, if there exist a path between any two distinct vertices. A component of $G$ is a maximally connected subdigraph of $G$. A cut-vertex of $G$ is a vertex whose removal results increase the number of components in $G$. Now, we like to define the idea of a block of a digraph, which plays a fundamental role in this article. It is already defined in literature for simple graphs \cite{bapat2014adjacency}. 
    
    \begin{definition}{\bf Block:} 
    A block is a maximally connected subdigraph of $G$ that has no cut-vertex.
    \end{definition}
    Note that, if $G$ is a connected digraph having no cut-vertex, then $G$ itself is a block. A block is called a pendant block if it contains only one cut-vertex of $G$, or it is the only block in that component. The blocks in a digraph can be found in linear time using John and Tarjan algorithm \cite{hopcroft1971efficient}. We define the cut-index of a cut-vertex $v$, with the number of blocks connected to $v$. We specifically denote a digraph having $k$ blocks with $G_k$.
    
    A square matrix $A=(a_{uv})\in \mathbb{C}^{n\times n}$ can be depicted by a weighted digraph $G(A)$ with $n$ vertices. If $a_{uv} \neq 0$, then $(u,v) \in E\Big(G(A)\Big)$, and $f(u,v) = a_{uv}$. The diagonal entry $a_{uu}$ corresponds to a loop at vertex $u$ having weight $a_{uu}$. If $v$ is a cut-vertex in $G(A)$, then we call $a_{vv}$ as the corresponding cut-entry in $A$. The following example will make this assertion transparent.

    \begin{example}
    The digraphs corresponding to the matrices $M_1$, and $M_2$ are presented in figure \ref{fig1}.
    $$M_1= \begin{bmatrix}
    0& 3& 2& 0& 0& 0& 0\\
    -7 &\color{red}5& -1& 1& -8& 0& 0\\
    2 &-1& 0& 0& 0& 0& 0\\
    0& 1& 0& 0 & 0& -3& 0\\
    0 & 12& 0& 0& 0& 1& 0\\
    0 &0& 0& 1& 1& \color{red}-4& 2\\
    0& 0& 0& 0& 0& 20& 3
    \end{bmatrix}, \ \ \ \ M_2= \begin{bmatrix}
    0& 3& 2& 0& 0& 0& 0& 0\\
    -7 &\color{red}5& -1& 1& -8& 0& 0&0\\
    2 &-1& 0& 0& 0& 0& 0&0\\
    0& 1& 0& 0 & 0& -3& 0&0\\
    0 & 12& 0& 0& 0& 1& 0&0\\
    0 &0& 0& 1& 1& \color{red}-4& 2&-2\\
    0& 0& 0& 0& 0& 20& 3&0\\
    0& 0& 0& 0& 0& -2& 0&10
    \end{bmatrix}.$$
    
    \begin{figure}
    \centering
    \begin{subfigure}{.5\textwidth}
      \centering
      \includegraphics[width=0.7\linewidth]{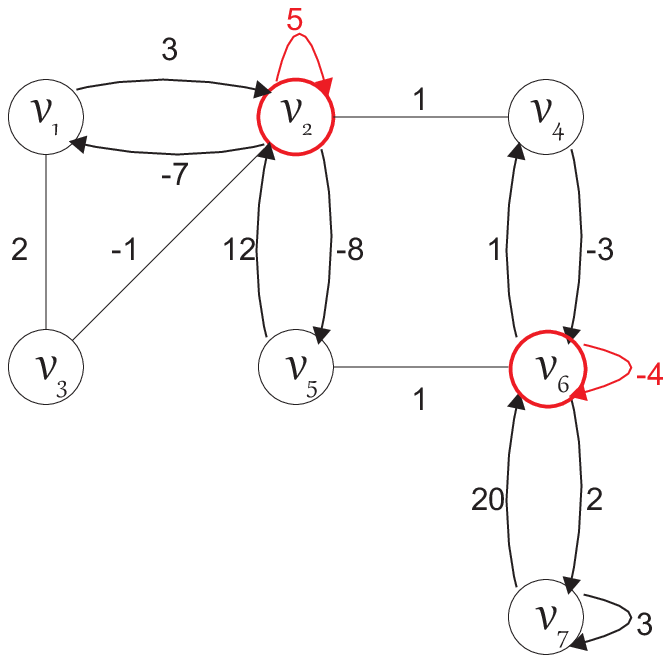}
      \caption{}
      \label{fig:sub1}
    \end{subfigure}%
    \begin{subfigure}{.5\textwidth}
      \centering
      \includegraphics[width=0.7\linewidth]{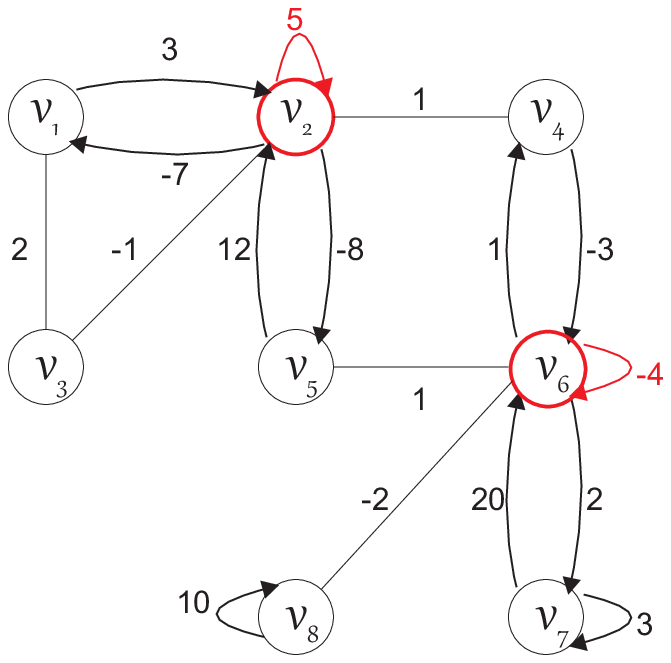}
      \caption{}
      \label{fig:sub2}
    \end{subfigure}
    \caption{(a) Digraph of matrix $M_1$ \ (b) Digraph of matrix $M_2$}
    \label{fig1}
    \end{figure}
    
    The cut-entries, and the cut-vertices are shown in red in the matrices $M_1$, and $M_2$, as well as in their corresponding digraphs $G(M_1)$, and $G(M_2)$. Note that, when $a_{uv} = a_{vu} \neq 0$, we simply denote edges $(u,v)$, and $(v,u)$ with an undirected edge $(u,v)$ with weight $a_{uv}$. As an example, in $G(M_1)$ the edge $(v_1, v_3)$ and $(v_3,v_1)$ are undirected edges.   The digraph $G(M_1)$, depicted in the figure \ref{fig1}(a), has blocks $B_1, B_2$, and $B_3$ which are induced subdigraphs on vertex subsets $\left\{v_1,v_2,v_3\right\}, \left\{v_2,v_4,v_5,v_6\right\}$, and $\left\{v_6,v_7\right\}$, respectively. Here, cut-vertices are $v_2$, and $v_6$ with cut-indices 2. Similarly, the digraph $G(M_2)$ in the figure \ref{fig1}(b), has blocks $B_1, B_2, B_3$, and $B_4$ on vertex sets $\left\{v_1,v_2,v_3\right\}, \left\{v_2,v_4,v_5,v_6\right\},\left\{v_6,v_7\right\}$, and $\left\{v_6,v_8\right\}$, respectively. Here, cut-vertices are $v_2$, and $v_6$ with cut-indices $2$, and $3$, respectively.
    \end{example}

    The characteristic polynomial of a digraph $G(A)$ is the characteristic polynomial of a matrix $A$. In other words, $\phi\Big(G(A)\Big) = \phi(A) = \det(A-\lambda I)$. Hence, the determinant of digraph $G(A)$ is the determinant of a matrix $A$. Similarly, the permanent polynomial of digraph $G(A)$ is the permanent polynomial of corresponding matrix $A$, or in other words $\psi\Big(G(A)\Big)=\psi(A)=$per$(A-\lambda I)$. Hence, the permanent of digraph $G(A)$ is the permanent of a matrix $A$.

    Inspired by a work on simple block graphs \cite{bapat2014adjacency}, we propose a new technique for computing the characteristic, and permanent polynomials of a matrix. First of all, we derive a recursive expression for these polynomials of a matrix with respect to a pendant block in the corresponding digraph. On solving this recursive expression we found that characteristic(permanent) of a digraph can be written in terms of the characteristic (permanent) polynomial of some specific induced subdigraphs of blocks. Interestingly, these induced subdigraphs are vertex-disjoint and they partition the digraph. Hence, this leads us to define a new partition called $\mathcal{B}$-partition of a digraph. Corresponding to every $\mathcal{B}$-partition we define the $\phi$-summand, and $\psi$-summand. Similarly, the $\det$-summand, and per-summand corresponding to each $\mathcal{B}$-partition is specified. Thus, we have found the characteristic, and permanent polynomials of a matrix in terms of $\phi$-summands, and $\psi$-summands, respectively, of the corresponding $\mathcal{B}$-partitions. Similarly,  the determinant, and permanent of the matrix can be found in terms of $\det$-summands, and per-summands, respectively. 
    This new method of calculation provides a combinatorial significance of the determinant, permanent, characteristic, and permanent polynomials of a matrix. A singular graph has a zero eigenvalue. Classifying singular graphs is a complicated problem in combinatorics \cite{sciriha2007characterization, bapat2011note, bapat2014adjacency}. In this article, we illuminate this problem with a number of examples with the new combinatoric implication. This procedure presents a simplified proof for the determinant of simple block graphs earlier given in \cite{bapat2014adjacency}. These graph-theoretic representations would be useful in future investigations in matrix theory.
    
    The paper is organized as follows. In section \ref{cpdg}, we derived a recursive form of the characteristic, and permanent polynomials of the digraph. In section \ref{betapartition}, we define $\mathcal{B}$-partition of digraph, and its corresponding $\phi$-summand, $\psi$-summand, $\det$-summand, and per-summand.     
    In section \ref{solutionofrecur}, we solve recursive expression derived in section \ref{cpdg} to get the characteristic, and permanent polynomials of digraph in terms of $\phi$-summands, and $\psi$-summands, respectively. In section \ref{detdi}, we provide results on the determinant of some simple graphs including block graph. In section \ref{future}, we discuss the case when digraph has no cut-vertex.  Finally, in section \ref{con}, we conclude the article and put some open problems.

    \section{Recursive form of characteristic and permanent polynomial} \label{cpdg}
    In this section, we provide a recursive expression for the characteristic and permanent polynomials of a digraph $G$, with respect to its pendant block. Let $Q$ be a subdigraph of $G$, then $G\setminus Q$ denotes the induced subdigraph of $G$ on the vertex subset $V(G)\setminus V(Q)$. Here, $V(G)\setminus V(Q)$ is the standard set-theoretic subtraction of vertex sets. Let $v$ be a cut-vertex of $G$, then the recursive expression for these polynomials, can also be given with respect to subdigraph $H$ containing $v$, such that, $H\setminus v$ is a union of components. For convenience, we relabel the digraph $G(A)$. In graph theory, these relabelling are captured by permutation similarity of $A$. Determinant of permutation matrices is equal to 1. Thus, relabelling on vertex set keep the determinant and the permanent unchanged. We frequently use this idea in this section.
    
    \begin{lemma} \label{reccurcp}
    Let $G$ be a digraph, having $B_1$ as a pendant block and $v$ be the cut-vertex of $G$ in $B_1$. Let the weight of loop at vertex $v$ be $\alpha$. The following recurrence relation holds for the characteristic polynomial,   
    $$\phi(G)=\phi(B_1)\times\phi(G\setminus B_1)+\phi(B_1\setminus v)\times \phi\Big(G\setminus(B_1\setminus v)\Big)+(\lambda-\alpha)\times\phi(B_1\setminus v)\times\phi(G\setminus B_1).$$     
    \end{lemma}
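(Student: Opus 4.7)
The plan is to apply the Laplace expansion of Theorem~\ref{Godsil} to $M := A - \lambda I$ with row-index set $S$ equal to the vertices of $B_1 \setminus v$, exploiting the zero blocks in $M$ that the cut-vertex $v$ forces. First I would relabel so that $V(B_1 \setminus v)$ occupies positions $1,\dots,k$, the cut-vertex $v$ occupies position $k+1$, and $R := V(G) \setminus V(B_1)$ occupies the remaining positions; permutation similarity preserves $\phi$. Writing $\tilde B$ and $\tilde C$ for the principal submatrices of $M$ indexed by $V(B_1 \setminus v)$ and $R$ respectively, the matrix takes the block form
$$M = \begin{pmatrix} \tilde B & p & 0 \\ q^T & \alpha - \lambda & r^T \\ 0 & s & \tilde C \end{pmatrix},$$
where the two zero blocks encode the fact that no edge of $G$ joins $B_1 \setminus v$ to $R$.

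I would then apply Theorem~\ref{Godsil} with $S = \{1,\dots,k\}$. Since each row in $S$ vanishes on the last $|R|$ columns, only $k$-subsets $T \subseteq V(B_1)$ can make $\det(M_{S,T})$ nonzero. Two types of such $T$ arise: the subset $T = \{1,\dots,k\}$, and the $k$ subsets $T_j := \{1,\dots,k+1\} \setminus \{j\}$ for $j \in \{1,\dots,k\}$. The subset $T = \{1,\dots,k\}$ contributes $\phi(B_1 \setminus v)\,\phi\bigl(G \setminus (B_1 \setminus v)\bigr)$ with sign $+1$, since $M_{S,T} = \tilde B$ and $M'_{S,T}$ is exactly the principal submatrix on $\{v\} \cup R$. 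For each $T_j$, the complementary submatrix $M'_{S,T_j}$ has a zero block in its lower-left corner (rows in $R$, column $j$), so it is block triangular and its determinant collapses to $q_j \cdot \phi(G \setminus B_1)$. Meanwhile $M_{S,T_j}$ coincides, up to one column transposition, with the matrix obtained from $\tilde B$ by replacing its $j$-th column by $p$, so Cramer's rule identifies $\det(M_{S,T_j})$ as $(-1)^{k-j}\,[\mathrm{adj}(\tilde B)\,p]_j$. After combining with the Laplace sign $(-1)^{w(S,T_j)}$, the $j$-dependence cancels and each term carries a uniform factor of $-1$, so summing over $j$ compresses the entire second family to $-q^T \mathrm{adj}(\tilde B)\,p \cdot \phi(G \setminus B_1)$.

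At this stage the expansion reads $\phi(G) = \phi(B_1 \setminus v)\,\phi\bigl(G \setminus (B_1 \setminus v)\bigr) - q^T \mathrm{adj}(\tilde B)\,p \cdot \phi(G \setminus B_1)$. To bring it to the three-term form stated in the lemma, I would invoke the bordered-determinant identity $\phi(B_1) = \det \begin{pmatrix} \tilde B & p \\ q^T & \alpha - \lambda \end{pmatrix} = (\alpha - \lambda)\phi(B_1 \setminus v) - q^T \mathrm{adj}(\tilde B)\,p$, which is a polynomial identity in $\lambda$ and so needs no invertibility hypothesis on $\tilde B$. Solving for $-q^T \mathrm{adj}(\tilde B)\,p$ and substituting produces exactly the claimed three-term recurrence. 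I anticipate that the main obstacle will be the sign bookkeeping for the $T_j$ family: the Laplace sign $(-1)^{w(S,T_j)}$ and the Cramer-conversion sign $(-1)^{k-j}$ both depend on $j$ and $k$, and only their product simplifies to the uniform $-1$ required for the sum over $j$ to assemble into the quadratic form $q^T \mathrm{adj}(\tilde B)\,p$; everything else---the block structure from the cut-vertex and the final substitution using the bordered-determinant identity---is routine.
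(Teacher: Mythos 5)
Your proof is correct, and it follows the same overall strategy as the paper --- a Laplace expansion of $A-\lambda I$ in the block form forced by the cut-vertex --- but with the complementary choice of fixed row set. The paper takes $S=V(B_1)$ (all $n_1$ block rows, cut-vertex included), so that $T=S$ yields $\phi(B_1)\phi(G\setminus B_1)$ directly and the remaining sets $T=\{1,\dots,n_1-1,r\}$, one for each vertex $r$ outside the block, assemble into $\phi(B_1\setminus v)\bigl(\phi(G\setminus(B_1\setminus v))+(\lambda-\alpha)\phi(G\setminus B_1)\bigr)$ by recognizing a cofactor expansion of the bordered matrix $\bigl[\begin{smallmatrix} \alpha-\lambda & y\\ z & \tilde C\end{smallmatrix}\bigr]$ along its first column. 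You instead take $S=V(B_1\setminus v)$, so the distinguished subset $T=S$ yields $\phi(B_1\setminus v)\phi(G\setminus(B_1\setminus v))$, and the $k$ exceptional subsets $T_j$ (one per non-cut-vertex of $B_1$) assemble into $-q^{T}\mathrm{adj}(\tilde B)\,p\cdot\phi(G\setminus B_1)$, which you then convert via the bordered-determinant identity $\phi(B_1)=(\alpha-\lambda)\phi(B_1\setminus v)-q^{T}\mathrm{adj}(\tilde B)\,p$. Both the uniform sign $-1$ (the exponent $w(S,T_j)+k-j$ reduces to $(k+1)^2+k$, which is odd for all $k$) and the adjugate identity (valid as a polynomial identity, so no invertibility of $\tilde B$ is needed) check out. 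The trade-off is that the paper's choice of $S$ lands on the three-term formula immediately, whereas yours needs the extra adjugate step; in exchange, your route makes the quadratic form $q^{T}\mathrm{adj}(\tilde B)\,p$ and its role in splitting $\phi(B_1)$ explicit, which is a clean way to see where the $(\lambda-\alpha)\phi(B_1\setminus v)\phi(G\setminus B_1)$ correction term comes from.
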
   
    \begin{proof}
    Let $A(G)$ be the matrix corresponding to the digraph $G$ having $n$ vertices. Let the number of vertices in block $B_1$ is $n_1$. With suitable reordering of vertices in $G$, let block $B_1$ has vertices with the labels \{1,2,$\hdots$,$n_1$\}. Here,  $n_1$-th vertex be the cut-vertex $v$ of $G$ in $B_1$. Let $x$,and $z$ are column vectors of order $(n_1-1)$, and $(n- n_1)$, respectively, such that $(x, \alpha, z)$ formulates $n_1$-th column vector of $A(G)$. Similarly, $w$, and $y$ are row vectors of order $(n_1-1)$, and $(n- n_1)$, respectively, such that $(w, \alpha, y)$ formulates $n_1$-th row vector of $A(G)$. Now,
    \begin{equation}
    A(G)-\lambda I=\begin{bmatrix}
    A(B_1\setminus v)-\lambda I& x & O \\
    w& \alpha-\lambda & y \\
    O & z& A(G\setminus B_1)-\lambda I \end{bmatrix}.
    \end{equation}

    Here, $O$ is the zero matrix of appropriate size. Let $x_i, w_i, z_i$, and $y_i$ denote the $i$-th entry of the vectors $x, w, z$, and $y$, respectively. Using theorem \ref{Godsil}, let us fix set $S = \{1,2,\hdots,n_1\}$, then $T$ is $n_1$-subset of $n$ columns in $A(G)-\lambda I$. Note that, $T$ must have numbers  $1,2,3\hdots,(n_1-1)$ otherwise it will give zero contribution to $\phi(G)$. This is because if any number $0\le i \le (n_1-1)$ is missing in $T$, then column corresponding to $i$ will be in submatrix $[A(G)-\lambda I]'_{S,T}$. As this column in submatrix $[A(G)-\lambda I]'_{S,T}$ has all zero entries, hence its determinant is zero.  Thus, only following sets of $S, T$ have possible non-zero contribution in $\phi(G)$.

    \begin{enumerate}
    \item $S=\{1,2,\hdots,n_1\}, \ T=\{1,2,\hdots,n_1\}.$ It contributes the following to $\phi(G)$
    \begin{equation}
    \phi(B_1)\times\phi(G\setminus B_1)
    \end{equation}

    \item $S=\{1,2,\hdots, n_1\}, \ T=\{1,2,\hdots, n_1-1,r \}$, where $r$ is a value from set $\{n_1+1,n_1+2,\hdots,n\}$. So there will be $n-n_1$ such possible set of $T$. Let  $T_i=\{1,2,\hdots,n_1-1,n_1+i\}$  for $i=1,2,\hdots,n-n_1.$
    
    Let $c_i$ be the $i$-th column of $A(G\setminus B_1)-\lambda I$, then contribution of these sets to $\phi(G)$ is
    \begin{equation} 
    \begin{split} 
    & \sum_{i=1}^{n-n_1}(-1)^{w(S,T_i)}\det[A(G)-\lambda I]_{S,T}\times\det[A(G)-\lambda I]'_{S,T}\\
    = & \sum_{i=1}^{n-n_1}(-1)^{w(S,T_i)}\det\begin{bmatrix}
    A(B_1\setminus v)-\lambda I&O\\
    w&  y_i
    \end{bmatrix}\times\det \begin{bmatrix}
    z& \Big(A(G\setminus B_1)-\lambda I\Big)\setminus c_i
    \end{bmatrix}\\
    = & \det(B_1\setminus v-\lambda I)\times\Bigg(\sum_{i=1}^{n-n_1}(-1)^{w(S,T_i)}\times y_i\times \det \begin{bmatrix}
    z& \Big(A(G\setminus B_1)-\lambda I\Big)\setminus c_i
    \end{bmatrix}\Bigg)\\
    = & \det(B_1\setminus v-\lambda I)\times\Bigg(\det\Big(G\setminus(B_1\setminus v)-\lambda I\Big)+(\lambda-\alpha)\times\det\Big(A(G\setminus B_1)-\lambda I\Big)\Bigg)\\
    = & \phi(B_1\setminus v)\times\Bigg(\phi\Big(G\setminus(B_1\setminus v)\Big)+(\lambda-\alpha)\phi(G\setminus B_1)\Bigg).
    \end{split}
    \end{equation} 
    Hence,
    
    \begin{equation}
    \phi(G)=\phi(B_1)\times\phi(G\setminus B_1)+\phi(B_1\setminus v)\times \phi\Big(G\setminus(B_1\setminus v)\Big)+(\lambda-\alpha)\times\phi(B_1\setminus v)\times\phi(G\setminus B_1)
    \end{equation}   
    \end{enumerate} 
    \end{proof}

    \begin{corollary} \label{reccurpp}
    Let $G$ be a digraph, having $B_1$ as a pendant block and $v$ be the cut-vertex of $G$ in $B_1$. Let the weight of loop at vertex $v$ be $\alpha$. The following recurrence relation holds for the permanent polynomial,  $$\psi(G)=\psi(B_1)\times\psi(G\setminus B_1)+\psi(B_1\setminus v)\times \psi\Big(G\setminus(B_1\setminus v)\Big)+(\lambda-\alpha)\times\psi(B_1\setminus v)\times\psi(G\setminus B_1)$$   
    \end{corollary}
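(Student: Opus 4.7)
The plan is to repeat the proof of Lemma \ref{reccurcp} almost verbatim, replacing the determinant by the permanent throughout and invoking the sign-free Laplace expansion for the permanent provided by Theorem \ref{Godsil}. After the same relabelling of vertices so that $B_1$ occupies $\{1, \ldots, n_1\}$ with $v = n_1$, I would write $A(G) - \lambda I$ in the same $3 \times 3$ block form as in the lemma, and fix $S = \{1, 2, \ldots, n_1\}$ for the Laplace expansion.

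Next I would verify that the zero-pattern argument from the lemma carries over unchanged: if any $i \in \{1, \ldots, n_1 - 1\}$ is missing from $T$, then column $i$ of the complementary submatrix is entirely zero (this uses only that $B_1$ is a pendant block, not properties of determinants), so its permanent vanishes as well. Hence only the two families $T = \{1, \ldots, n_1\}$ and $T = T_i := \{1, \ldots, n_1 - 1, n_1 + i\}$ contribute. The first case produces $\psi(B_1) \cdot \psi(G \setminus B_1)$ directly. For the second, the block form of $[A(G) - \lambda I]_{S, T_i}$ gives $\mathrm{per}\,[A(G) - \lambda I]_{S, T_i} = \psi(B_1 \setminus v) \cdot y_i$ (by expanding the permanent along the last column, whose only nonzero entry is $y_i$), while the complementary factor is $\mathrm{per}\,[z, \ (A(G \setminus B_1) - \lambda I) \setminus c_i]$.

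The one computational step that is not pure bookkeeping is to evaluate
\[
\sum_{i=1}^{n - n_1} y_i \cdot \mathrm{per}\,[z, \ (A(G \setminus B_1) - \lambda I) \setminus c_i].
\]
I would recognize this as the Laplace expansion of $\mathrm{per}(N)$ along the first row of
\[
N := \begin{bmatrix} \alpha - \lambda & y \\ z & A(G \setminus B_1) - \lambda I \end{bmatrix},
\]
minus the $(\alpha - \lambda)\,\psi(G \setminus B_1)$ term coming from the $(1,1)$ entry. Since $N = A\bigl(G \setminus (B_1 \setminus v)\bigr) - \lambda I$, we have $\mathrm{per}(N) = \psi\bigl(G \setminus (B_1 \setminus v)\bigr)$, so the sum equals $\psi\bigl(G \setminus (B_1 \setminus v)\bigr) + (\lambda - \alpha)\,\psi(G \setminus B_1)$. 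Combining the contributions of the two cases then reproduces the claimed recurrence.

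Overall the argument is actually a shade simpler than its characteristic polynomial counterpart, since there are no $(-1)^{w(S,T)}$ signs to combine with signs inherited from the determinantal Laplace expansion. The only real thing to be attentive to is the choice of the row or column along which to expand the permanent in each submatrix, but the symmetry of the permanent in rows and columns makes this routine, so I do not anticipate a genuine obstacle.
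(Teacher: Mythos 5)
Your proposal is correct and is exactly the route the paper intends: the paper's own proof of this corollary is just the remark that it is ``similar to Lemma \ref{reccurcp},'' i.e.\ repeat that argument with the sign-free permanent Laplace expansion from Theorem \ref{Godsil}. Your filled-in details --- the zero-column argument for discarding all other $T$, the factorization $\mathrm{per}[A(G)-\lambda I]_{S,T_i}=\psi(B_1\setminus v)\,y_i$, and the identification of $\sum_i y_i\,\mathrm{per}[z,\ (A(G\setminus B_1)-\lambda I)\setminus c_i]$ with $\psi\bigl(G\setminus(B_1\setminus v)\bigr)+(\lambda-\alpha)\psi(G\setminus B_1)$ --- all check out.
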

    \begin{proof}
    The proof is similar to lemma \ref{reccurcp}.
    \end{proof}

    We generalize Lemma (\ref{reccurcp}) with respect to some subdigraphs containing cut-vertex in the next lemma. 
    
    \begin{lemma} \label{recurrsubp}
    Let $G$ be a digraph with at least one cut-vertex. Let $H$ be a non empty subdigraph of $G$ having cut-vertex $v$ with loop weight $\alpha$, such that $H\setminus v$ is union of components. The characteristic polynomial of $G$,
    $$\phi(G)=\phi(H)\times\phi(G\setminus H)+\phi(H\setminus v)\times \phi\Big(G\setminus(H\setminus v)\Big)+(\lambda-\alpha)\times\phi(H\setminus v)\times\phi(G\setminus H).$$ 
    \end{lemma}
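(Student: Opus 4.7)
The plan is to reuse the argument of Lemma \ref{reccurcp} almost verbatim, after observing that the structural feature actually needed in that proof is not the pendant-block property of $B_1$ but merely the absence of edges between $H \setminus v$ and $G \setminus H$. I would first argue that the hypothesis ``$H \setminus v$ is a union of components'' supplies exactly this. If the components are taken in $G \setminus v$, then any edge of $G$ joining a vertex $u \in H \setminus v$ to a vertex $u' \in G \setminus H$ would survive the deletion of $v$ and therefore put $u$ and $u'$ in the same component of $G \setminus v$, contradicting the assumption that the component containing $u$ lies entirely in $H \setminus v$. Hence no such edges exist.

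Next I would relabel the vertices so that $V(H \setminus v)$ carries indices $1, \ldots, n_1 - 1$, the cut-vertex $v$ carries index $n_1$, and $V(G \setminus H)$ carries indices $n_1 + 1, \ldots, n$. By the observation above, the $(1, n_1+1)$- and $(n_1+1, 1)$-blocks of $A(G) - \lambda I$ vanish, and the matrix takes exactly the form
\[
A(G) - \lambda I = \begin{bmatrix}
A(H \setminus v) - \lambda I & x & O \\
w & \alpha - \lambda & y \\
O & z & A(G \setminus H) - \lambda I
\end{bmatrix},
\]
identical to the matrix in Lemma \ref{reccurcp} with $B_1$ replaced by $H$.

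From this point onward the computation is mechanical: apply Theorem \ref{Godsil} with $S = \{1, 2, \ldots, n_1\}$. The zero blocks force any nonvanishing term to correspond to a column set $T$ containing $\{1, \ldots, n_1 - 1\}$, leaving only the two families $T = \{1, \ldots, n_1\}$ and $T_i = \{1, \ldots, n_1 - 1, n_1 + i\}$ for $i = 1, \ldots, n - n_1$. The first family contributes $\phi(H) \cdot \phi(G \setminus H)$. For the second family, cofactor-expanding the upper submatrix along its last row and the lower submatrix along its first column yields, by the same calculation as in Lemma \ref{reccurcp}, the contribution
\[
\phi(H \setminus v) \cdot \Bigl( \phi\bigl(G \setminus (H \setminus v)\bigr) + (\lambda - \alpha)\, \phi(G \setminus H) \Bigr).
\]
Summing the two contributions gives the claimed identity.

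The only substantive step is the structural observation in the first paragraph; the linear-algebraic computation is already done in Lemma \ref{reccurcp}, and Lemma \ref{reccurcp} itself is recovered by taking $H = B_1$, since a pendant block automatically satisfies the union-of-components hypothesis at its unique cut-vertex.
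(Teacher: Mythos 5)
Your proposal is correct and follows essentially the same route as the paper, which simply relabels so that $v$ is the $n_1$-th vertex and declares the rest ``similar to Lemma~\ref{reccurcp}''; you have additionally made explicit the one point the paper leaves implicit, namely that the union-of-components hypothesis is what forces the zero off-diagonal blocks between $H\setminus v$ and $G\setminus H$ and hence lets the Laplace-expansion computation of Lemma~\ref{reccurcp} go through verbatim. No gaps.
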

    \begin{proof}
    Let the number of vertices in subdigraph $H$ is $n_1.$ With suitable reordering of vertices in $G$,  let $n_1$-th vertex be the cut-vertex $v$ of $G$ in $H$. Remaining  proof is similar to lemma(\ref{reccurcp}).
    \end{proof}
    
    \begin{corollary}
    Let $G$ be a digraph with at least one cut-vertex. Let $H$ be a non empty subdigraph of $G$ having cut-vertex $v$ with loop weight $\alpha$, such that $H\setminus v$ is union of components. The permanent polynomial of $G$,
    $$\psi(G)=\psi(H)\times\psi(G\setminus H)+\psi(H\setminus v)\times \psi\Big(G\setminus(H\setminus v)\Big)+(\lambda-\alpha)\times\psi(H\setminus v)\times\psi(G\setminus H).$$ 
    \end{corollary}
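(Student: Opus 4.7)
The proof follows the same strategy as Lemma~\ref{recurrsubp}, with two substitutions: the permanent version of the Laplace expansion in Theorem~\ref{Godsil} (which carries no sign factors) replaces the determinant version, and an unsigned row expansion of the permanent replaces the signed cofactor step at the end. I would first relabel $G$ so that the vertices of $H$ receive labels $\{1,2,\ldots,n_1\}$ with the cut-vertex $v$ placed at position $n_1$. Because $H\setminus v$ is a union of components of $G-v$, there are no edges between $V(H)\setminus\{v\}$ and $V(G)\setminus V(H)$, so $A(G)-\lambda I$ has exactly the block form displayed in equation~(1) of the proof of Lemma~\ref{reccurcp}.

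I would then apply the permanent Laplace expansion with $S=\{1,2,\ldots,n_1\}$ as the fixed row set. The same zero-column argument used in Lemma~\ref{reccurcp} shows that any index in $\{1,\ldots,n_1-1\}$ missing from $T$ forces a zero column in the complementary submatrix, so only $T=\{1,\ldots,n_1\}$ and $T_i=\{1,\ldots,n_1-1,n_1+i\}$ for $i=1,\ldots,n-n_1$ contribute non-trivially. The first choice gives $\psi(H)\cdot\psi(G\setminus H)$. For each $T_i$, the block-triangular structure of the two submatrices produces a contribution of $\psi(H\setminus v)\cdot y_i\cdot\mathrm{per}\bigl[\,z\ \ (A(G\setminus H)-\lambda I)\setminus c_i\,\bigr]$, where $c_i$ is the $i$th column of $A(G\setminus H)-\lambda I$.

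To close the argument, I would sum over $i$ and recognize the resulting expression as the row expansion of $\mathrm{per}\bigl(A(G\setminus(H\setminus v))-\lambda I\bigr)$ along the row indexed by the cut-vertex $v$, with the diagonal contribution separated out, giving
$$\sum_{i=1}^{n-n_1} y_i\,\mathrm{per}\bigl[\,z\ \ (A(G\setminus H)-\lambda I)\setminus c_i\,\bigr]=\psi\bigl(G\setminus(H\setminus v)\bigr)-(\alpha-\lambda)\,\psi(G\setminus H).$$
Adding this to the $T=\{1,\ldots,n_1\}$ contribution and factoring $\psi(H\setminus v)$ yields the stated recurrence. The only delicate point is the last row-expansion identity: since the determinant analog in Lemma~\ref{reccurcp} relied on signed cofactor expansion, one must verify that the unsigned permanent expansion reproduces the $(\lambda-\alpha)$ coefficient with the right sign. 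This is automatic because the diagonal entry at $v$ in $A\bigl(G\setminus(H\setminus v)\bigr)-\lambda I$ equals $\alpha-\lambda$ and contributes with a plus sign in the permanent row expansion, which on transposition to the other side of the equality becomes $(\lambda-\alpha)$. Beyond this routine bookkeeping, no further obstacle arises, so the proof is essentially parallel to Lemma~\ref{recurrsubp}.
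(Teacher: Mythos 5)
Your proposal is correct and follows essentially the same route as the paper, which simply invokes the argument of Lemma~\ref{recurrsubp} (itself modeled on Lemma~\ref{reccurcp}) with the unsigned permanent Laplace expansion in place of the signed determinant expansion. Your explicit verification of the final row-expansion identity, including the sign of the $(\lambda-\alpha)$ term, is the only bookkeeping the paper leaves implicit, and you have done it correctly.
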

    \begin{proof}
    The proof follows from lemma \ref{recurrsubp}.
    \end{proof}

    \section{$\mathcal{B}$-partitions of a digraph} \label{betapartition}
    We define a new partition of digraph which helps in finding its characteristic (permanent) polynomial.

    \begin{definition} \label{def1}
     Let $G_k$ be a digraph having $k$ blocks $B_1, B_2, \hdots B_k$. Then, a $\mathcal{B}$-partition of $G_k$ is a partition in $k$ vertex disjoint induced subdigraphs $\hat{B_1}, \hat{B_2}, \hdots, \hat{B_k}$, such that, $\hat{B_i}$ is a subdigraph of $B_i$. The $\phi$-summand, and $\det$-summand of this $\mathcal{B}$-partition is  $$\prod_{i}^{k}\phi (\hat{B_i}), ~\text{and}~ \prod_{i}^{k}\det (\hat{B_i}),$$ respectively, where by convention $\phi(\hat{B_i})=1, ~\text{and}~\ \det(\hat{B_i})=1 $ if $\hat{B_i}$ is a null graph. Similarly, its $\psi$-summand, and $\emph{per}$-summand is given by $$\prod_{i}^{k}\psi  (\hat{B_i}), ~\text{and}~ \ \ \prod_{i}^{k}\emph{per} (\hat{B_i}),$$ respectively, where by convention $\psi (\hat{B_i})=1, ~\text{and}~ \emph{per} (\hat{B_i})=1$ if $\hat{B_i}$ is a null graph.
    \end{definition}

    Examples of a $\mathcal{B}$-partition of digraph of matrix $M_1,$ and $M_2$ are shown in figure \ref{fig2}(a),\ (b), respectively. All possible $\mathcal{B}$-partitions of $M_1,$ and $M_2$ are given in appendix in figure: \ref{allbm1}, and \ref{allbm2}, respectively.
    \begin{figure}
    \centering
    \begin{subfigure}{.5\textwidth}
      \centering
      \includegraphics[width=0.7\linewidth]{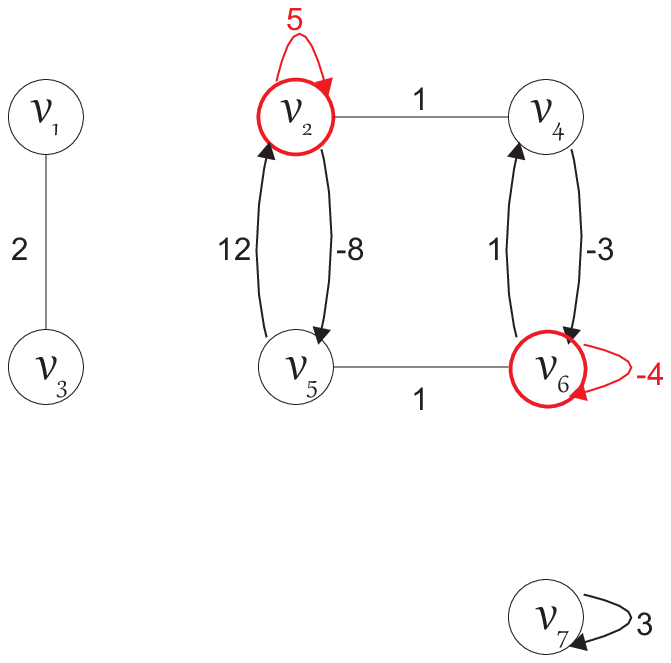}
      \caption{}
      \label{fig:sub1}
    \end{subfigure}%
    \begin{subfigure}{.5\textwidth}
      \centering
      \includegraphics[width=0.7\linewidth]{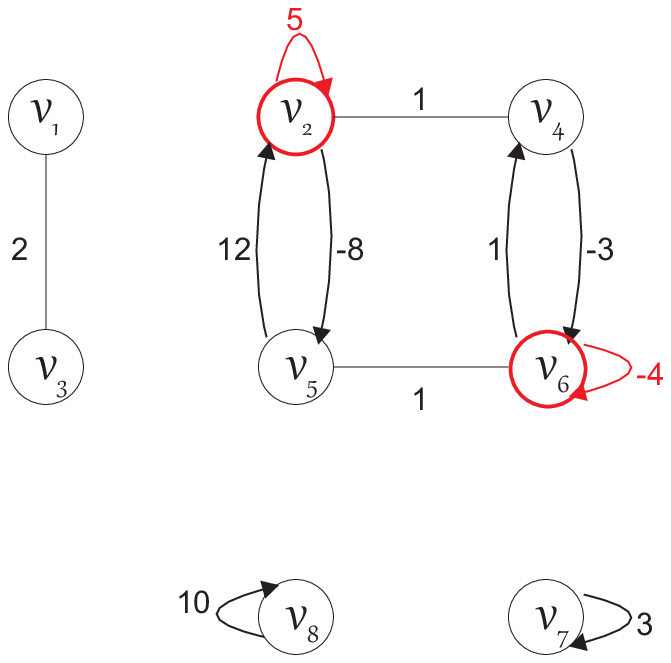}
      \caption{}
      \label{fig:sub2}
    \end{subfigure}
    \caption{Example of a $\mathcal{B}$-partition of (a) Digraph of matrix $M_1$ \ (b) Digraph of matrix $M_2$}
    \label{fig2}
    \end{figure}
    
    \begin{corollary}
    Let $G$ be a digraph having $t$ cut-vertices with cut-indices $d_1,d_2,\hdots,d_t$, respectively. The number of $\mathcal{B}$-partitions of $G$ is  $$\prod_{i=1}^{t}d_i.$$  
    \end{corollary}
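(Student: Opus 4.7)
The plan is to view a $\mathcal{B}$-partition as being determined, vertex by vertex, by how each vertex is assigned to a block containing it. First I would observe that, by definition of a block, any non-cut-vertex of $G$ lies in exactly one block of $G$, while a cut-vertex $v$ with cut-index $d_i$ lies in exactly $d_i$ blocks (this is just the definition of cut-index). Since the induced subdigraphs $\hat{B}_1,\ldots,\hat{B}_k$ must be vertex-disjoint and cover $V(G)$, every vertex of $G$ must be placed into exactly one of the blocks it belongs to.

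Next I would set up the counting. For a non-cut-vertex there is no choice: it must go into the unique block containing it. For each cut-vertex $v_i$ with cut-index $d_i$, we must choose one of the $d_i$ blocks containing $v_i$ and place $v_i$ there. These choices are completely independent across different cut-vertices, because each cut-vertex's membership list is determined solely by the block structure of $G$. By the multiplication principle, the total number of distinct assignments is $\prod_{i=1}^{t} d_i$.

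Finally I would verify that each such assignment produces a legitimate $\mathcal{B}$-partition, and conversely that every $\mathcal{B}$-partition arises this way. Given an assignment, for each block $B_j$ let $\hat{B}_j$ be the induced subdigraph of $G$ on the set of vertices assigned to $B_j$; this set is a subset of $V(B_j)$, so $\hat{B}_j$ is an induced subdigraph of $B_j$ (since $B_j$ itself is an induced subdigraph of $G$). The $\hat{B}_j$'s are vertex-disjoint by construction, and their union covers $V(G)$ because every vertex was assigned to some block. Conversely, any $\mathcal{B}$-partition $\{\hat{B}_1,\ldots,\hat{B}_k\}$ assigns each vertex uniquely to the (unique) block $B_j$ with $v \in V(\hat{B}_j)$, recovering an assignment of the type above.

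The only subtle point, and hence the main thing to be careful about, is the independence of the choices at different cut-vertices: one should check that choosing different host-blocks for two distinct cut-vertices never produces conflicting constraints. This is immediate because the only possible conflict would be two cut-vertices claiming the same vertex, but each vertex is assigned by its own (single) choice; there is no shared resource. Once this is noted, the count $\prod_{i=1}^{t} d_i$ follows directly.
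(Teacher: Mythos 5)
Your proposal is correct and follows essentially the same argument as the paper: a $\mathcal{B}$-partition is determined by independently assigning each cut-vertex to one of the $d_i$ blocks containing it, giving $\prod_{i=1}^{t}d_i$ by the multiplication principle. Your write-up is more careful than the paper's (you explicitly verify the bijection between assignments and $\mathcal{B}$-partitions), but the underlying idea is identical.
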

    \begin{proof}
    Each cut-vertex associates with an induced subdigraph of exactly one block in a $\mathcal{B}$-partition. For $i$-th cut-vertex there are $d_i$  choices of blocks. Hence, the result follows. 
    \end{proof}
    
    The number of $\mathcal{B}$-partitions in digraphs corresponding to matrices $M_1$, and $M_2$ are $4$, and $6$,  respectively.

    \section{Characteristic and Permanent Polynomial of Matrix} \label{solutionofrecur}
    In this section, we derive an expression for characteristic polynomial of an arbitrary square matrix using $\phi$-summands in $\mathcal{B}$-partitions of the digraph corresponding to the matrix. Similarly, we apply $\psi$-summands for the permanent polynomial.
    
    Let $G_k$ be a digraph having $k$ blocks $B_1, B_2, \hdots, B_k$. Let $G_k$ has $m$  cut-vertices with cut-indices $d_1, d_2, \hdots, d_m$. Also,  weights of loops at these vertices are $\alpha_1, \alpha_2, \hdots, \alpha_m$, respectively. Then, the followings are steps to calculate characteristic, and permanent polynomials of $G_k$. \\
    \begin{procedure}
    $\vspace{2mm}$
    \begin{enumerate}
    \item Add $-\lambda$ with the loop-weight at each vertex. Whenever, there is no loop at a vertex then, add a loop with weight $-\lambda$.
    \item For $q=0,1,2,\dots,m,$\begin{enumerate}
    \item Delete any $q$ cut-vertices at a time from $G_k$ to construct an induced subdigraph. In this way, construct all $\binom{m}{q}$ induced subdigraphs of $G_k$.
    \item Find all possible $\mathcal{B}$-partitions of each subdigraphs constructed in (a).
    
    \item For each $\mathcal{B}$-partition in (b)  multiply its $\phi$-summand($\psi$-summand) by $\prod_{i}^{}(\lambda-\alpha_i)(d_i-1).$ Here, $i=1,2,\dots,q$. Also, $\alpha_i$, and $d_i$ are weight and cut-index of removed $i$-th cut-vertex, respectively.  
    \end{enumerate}
     
    \item Sum all the terms in 2(c).
    
    \end{enumerate}
    
    \end{procedure}
    In the next theorem, we justify the above procedure to find characteristic polynomial of an arbitrary matrix.

    \begin{theorem} \label{matrixcp}
    Let $G_k$ be a digraph with $m$ cut-vertices, and $k$ blocks $B_1, B_2,\hdots, B_k$. Let $G^q_{k}=\left\{G^q_{k1}, G^q_{k2},\hdots, G^q_{k\binom{m}{q}}\right\}$ be set of all induced subdigraphs of $G_k$ after removing any $q$ cut-vertices. Also, let $d_{i1}, d_{i2},\hdots, d_{iq}$ be cut-indices, and $\alpha_{i1}-\lambda, \alpha_{i2}-\lambda,\hdots, \alpha_{iq}-\lambda$ be weights of loop of these removed $q$ cut-vertices to form $G^q_{ki}$. Let $S^q_{ki}$ denote summation of all $\phi$-summands of all possible $\mathcal{B}$-partition of $G^q_{ki}.$ Then, the characteristic polynomial of $G_k$ is given by,
    $$\phi(G_k)=\sum_{q=0}^{m}\varphi(G^q_{k}), ~\text{where}~ \varphi(G^q_{k})=\sum_{i=1}^{\binom{m}{q}}\Big(L^q_{i} S^q_{ki}\Big) ~\text{and}~ L^q_{i}=\prod_{t=1}^{q}(\lambda-\alpha_{it})(d_{it}-1),\ \ \ L^0_{1}=1.$$

    \end{theorem}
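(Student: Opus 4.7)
The plan is to prove this by induction on the number of blocks $k$. The base case $k = 1$ is immediate: then $G_1$ is itself a single block with $m = 0$ cut-vertices, the only $\mathcal{B}$-partition takes $\hat{B_1} = B_1$, and both sides of the claimed identity reduce to $\phi(B_1)$.

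For the inductive step, assume the formula for all digraphs with fewer than $k$ blocks, let $G_k$ have $k \geq 2$ blocks, and choose a pendant block $B_1$ with cut-vertex $v$ of cut-index $d_v \geq 2$ and loop-weight $\alpha$. Applying Lemma~\ref{reccurcp} produces three terms, which I label (T1) $= \phi(B_1)\phi(G_k \setminus B_1)$, (T2) $= \phi(B_1 \setminus v)\phi(G_k \setminus (B_1 \setminus v))$, and (T3) $= (\lambda - \alpha)\phi(B_1 \setminus v)\phi(G_k \setminus B_1)$. Both $G_k \setminus B_1$ and $G_k \setminus (B_1 \setminus v)$ have $k - 1$ blocks (using the blocks of $G_k$ restricted to the retained vertex sets), so I expand $\phi(G_k \setminus B_1)$ and $\phi(G_k \setminus (B_1 \setminus v))$ via the inductive hypothesis. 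Three structural facts to record: the remaining cut-vertices of $G_k$ sit outside $B_1$ and keep the same cut-indices and loop-weights in both subdigraphs; in $G_k \setminus (B_1 \setminus v)$ the cut-index of $v$ drops to $d_v - 1$; and $(G_k \setminus (B_1 \setminus v)) \setminus v = G_k \setminus B_1$.

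The combinatorial core of the proof is to match the resulting double expansion with the target sum $\sum_{q=0}^{m} \varphi(G_k^q)$ by grouping terms according to the subset $Q$ of cut-vertices of $G_k$ that are removed and, for each such $v \notin Q$, the block to which $v$ is assigned. Three cases handle the role of $v$. (a) If $v \notin Q$ and $v$ is assigned to $B_1$, the term comes uniquely from (T1), with $\hat{B_1} = B_1$. (b) If $v \notin Q$ and $v$ is assigned to some $B_j$ with $j \geq 2$, the term comes uniquely from (T2); the $d_v - 1$ admissible choices of $j$ match the $d_v - 1$ blocks containing $v$ inside $G_k \setminus (B_1 \setminus v)$. (c) If $v \in Q$, the target coefficient is $(\lambda - \alpha)(d_v - 1)$; here (T3) contributes $(\lambda - \alpha) \cdot 1$, while the part of (T2) in which $v$ is inductively treated as a removed cut-vertex of $G_k \setminus (B_1 \setminus v)$ contributes $(\lambda - \alpha)(d_v - 2)$, because the cut-index of $v$ in that subdigraph is $d_v - 1$. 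Via $(G_k \setminus (B_1 \setminus v)) \setminus v = G_k \setminus B_1$, these two contributions live over the same $\mathcal{B}$-partitions of $G_k \setminus B_1$ paired with the same reduced removal set $Q \setminus \{v\}$, and their coefficients add to $(\lambda - \alpha)[1 + (d_v - 2)] = (\lambda - \alpha)(d_v - 1)$; for $d_v = 2$, the (T2) piece is vacuous and (T3) alone supplies the required $\lambda - \alpha$.

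The main obstacle I anticipate is the bookkeeping in case (c): one must see that the coefficient $(d_v - 1)$ is recovered as $1 + (d_v - 2)$ and verify that the $\mathcal{B}$-partitions coming from (T2) and (T3) coincide without any double counting or omission. A secondary technical point is the convention under which a $\mathcal{B}$-partition of $G_k \setminus B_1$ or of $G_k \setminus (B_1 \setminus v)$ is interpreted through the block structure of the ambient $G_k$ (restricted to the surviving vertex set), which is what lets the inductive hypothesis plug cleanly into the target right-hand side. The corresponding permanent-polynomial statement would follow by the same argument, using the Laplace expansion for the permanent in place of that for the determinant, with all signs dropping out.
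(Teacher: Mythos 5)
Your proposal is correct and follows essentially the same route as the paper: induction on the number of blocks, peeling off a pendant block via Lemma~\ref{reccurcp}, and recovering the coefficient $(\lambda-\alpha)(d_v-1)$ for a removed cut-vertex $v$ by adding the contribution $1$ from the third term to the contribution $d_v-2$ from the term involving $G_k\setminus(B_1\setminus v)$, where $v$ has cut-index $d_v-1$. Your case analysis (a)--(c) is in fact a more explicit version of the paper's bookkeeping, and your remark about interpreting $\mathcal{B}$-partitions of the subdigraphs through the block structure inherited from $G_k$ addresses a convention the paper leaves implicit.
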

    \begin{proof}
    We use method of mathematical induction on $G_k$, with $k \in \mathbb{N}$ to prove the theorem.
    \begin{enumerate}
    \item
    Let $k = 1$. Given any digraph $G_1$ having only one block, 
    $$\phi(G_1)=\varphi(G^0_1)=L^0_1S^0_{11}=S^0_{11}.$$ 
    Here is only one $\mathcal{B}$-partition, which is block itself, so $\phi$-summand is $\phi(G_1)$, that is, $S^0_{11}=\phi(G_1)$. Hence, this theorem is valid in this case.\\  
    
    \item 
    For $k=2$, digraph $G_2$ has two blocks with one cut-vertex. Let they be $B_1$, and $B_2$. Also, let $v$ be the cut-vertex with cut-index 2, and the loop-weight at vertex $v$ be  $\alpha_1$. Thus,

    $$G^0_{2}=\left\{G^0_{21}\right\}, \ G^1_{2}=\left\{G^1_{21}\right\},$$ 
    where, $G^0_{21}=G_2,\ G^1_{21}=G_2\setminus v.$ Now we find $\varphi(G^0_2), \varphi(G^1_2)$. 
    \begin{enumerate}
    \item 
    Note that, to compute $\varphi(G^0_2)$ there is no cut-vertex to remove. Thus, $q=0$, $\binom{m}{q}=1$, and $L^0_{1}=1.$ There are two $\mathcal{B}$-partition of $G^0_{21}$. One of them consists of induced subdigraphs $B_1$, and $B_2\setminus v$. Another one contains the induced subdigraphs  $B_1\setminus v$, and $B_2$. Hence, 
    $$S^0_{21}=\phi(B_1)\phi(B_2\setminus v)+\phi(B_2)\phi(B_1\setminus v)=\varphi(G^0_2).$$
    
    \item 
    Also, to compute $\varphi(G^1_2)$ there is only cut-vertex to remove. Here, $q=1$, $\binom{m}{q}=1$, and $d_{11}=2$. Thus, $L^1_{1}=(\lambda-\alpha_1).$ The only possible $\mathcal{B}$-partition of $G^1_{21}$ contains the induced subdigraphs $B_1\setminus v$, and $B_2\setminus v$. Hence,
    \begin{equation}
    \begin{split}
        & S^1_{21}=\phi(B_1\setminus v)\phi(B_2\setminus v), \\
         \text{therefore,}~ & \varphi(G^1_{2})=(\lambda-\alpha_1) \phi(B_1\setminus v)\phi(B_2\setminus v).
    \end{split}
    \end{equation}

    \end{enumerate}
    On combining $(a)$, and $(b)$ we observe, 
    $$\phi(G_2)=\phi(B_1)\phi(B_2\setminus v)+\phi(B_2)\phi(B_1\setminus v)+(\lambda-\alpha_1)\phi(B_1\setminus v)\phi(B_2\setminus v).$$ 
    
    Using Lemma \ref{reccurcp} on $G_2$ we also obtain the above expression. It proves the theorem for $k = 2$. 
     
    Hence theorem is true for $G_2.$\\

    \item 
    Now, we assume the theorem is true for any $G_n$ for $2< n\le k$. We need to prove the theorem for $G_{n+1}$. We can always select a pendant block of $G_{n+1}$ and denote it as $B_{n+1}$.  Let $v$ be the cut-vertex of $G_{n+1}$ in $B_{n+1}$, having loop-weight $(\alpha-\lambda)$. Note that, the digraph $G_{n+1}$ has an induced subdigraph $G_{n}=\Big(G_{n+1}\setminus(B_{i+1}\setminus v)\Big)$. The theorem is true for $G_n$ by the assumption of induction. 
    Then, from Lemma \ref{reccurcp}, 
    \begin{equation}\label{G_{n+1}}
    \phi(G_{n+1})=\phi(G_n)\phi(B_{n+1}\setminus v)+\phi(G_n\setminus v)\phi(B_{n+1})+(\lambda-\alpha )\Big(\phi(G_n\setminus v)\phi(B_{n+1}\setminus v)\Big).
    \end{equation}

    In this context, there are two cases depending on whether $v$ is also a cut-vertex of induced subdigraph $G_n$ or not. 
    
    \begin{enumerate}
    \item 
    Let the vertex $v$ is not a cut-vertex of $G_n$. In this case, the number of cut-vertices in $G_{n+1}$ is 1 more than that of $G_n$. In $G_{n+1}$ all the cut-vertices have same cut-indices as is in $G_{n}$, except $v$. The cut-index of $v$ in $G_{n+1}$ is 2. 
    
    \item 
    Let the vertex $v$ is a cut-vertex of $G_n$. 
     In this case, the number of cut-vertices in $G_{n+1}$ are same as the number of cut-vertices in $G_n$. In $G_{n+1}$ all the cut-vertices have same cut-indices as is in $G_{n}$, except $v$. If the cut-vertex $v$ has cut-index equals to $d_v$ in $G_{n}$, then it has cut-index $d_v+1$ in $G_{n+1}$. 
    %

    Now, we check whether theorem for $G_{n+1}$ is equivalent to equation (\ref{G_{n+1}}) for the both the cases. For each set of $q$  deleted cut-vertices from $G_{n+1}$ as required by theorem, there are two cases:  
    
    \begin{enumerate}
    \item The cut-vertex $v$ is not in deleted cut-vertices:
    
    Note that, in this case, cut-indices, and loops-weights of these removed $q$ cut-vertices in $G_{n+1}$ are same as those were in $G_n$. Thus, $L^q_i$ value corresponding to $G^q_{(n+1)i}$ remain same as for $G^q_{ni}$. Also, in this case, in all $\phi$-summands of $G_{n+1}$, of either  $\phi(B_{n+1})$ or $\phi(B_{n+1}\setminus v)$ has to be there. The first term in right hand side of equation (\ref{G_{n+1}}), that is,    $\phi(G_n)\phi(B_{n+1}\setminus v)$ gives all the required $\phi$-summands where $\phi(B_{n+1}\setminus v)$ is there. The second term in right hand side, that is,  $\phi(G_n\setminus v)\phi(B_{n+1})$ gives all required $\phi$-summands where $\phi(B_{n+1})$ is there. So, all the required $\phi$-summands of $G_{n+1}$ are generated when $q$ cut-vertices are deleted from $G_{n+1}$ and cut-vertex $v$ is not included in these $q$ cut-vertices. \\

    \item The cut-vertex $v$ is in deleted cut-vertices.\\
    As $v$ get removed, in this case, in $G_{n+1}$ all the $\phi$-summands must have term, $\phi(B_{n+1}\setminus v)$. The first term in right hand side of equation (\ref{G_{n+1}})  that is, $\phi(G_n)\phi(B_{n+1}\setminus v)$ gives  $\phi$-summands where $\phi(B_{n+1}\setminus v)$ is there. Note that, in $G_{n+1}$, in case of (a), cut-vertex $v$ has cut-index equal to 2, in case of (b), cut-vertex $v$ has cut-index equal to $d_v+1$. Hence, in both the cases there must be a extra $\phi$-summand multiplied by $(\lambda-\alpha)\phi(B_{n+1}\setminus v)$ corresponding to each $\phi$-summand in $G_n$ when these $q$ cut-vertices being removed. The third term in right hand side gives these extra summands. Therefore, all the $\phi$-partitions of $G_{n+1}$ can be obtained when $q$ cut-vertices are deleted from $G_{n+1}$ and $v$ is included in these $q$ cut-vertices. 
    \end{enumerate} 
    
    \end{enumerate} 
    \end{enumerate}
    Hence, the statement is true for $G_{n+1}$. This proves the theorem.

    \end{proof}
    
    \begin{corollary} \label{matrixpp}
    Let $G_k$ be a digraph with $m$ cut-vertices, and $k$ blocks $B_1, B_2,\hdots, B_k$. Let $G^q_{k}=\left\{G^q_{k1}, G^q_{k2},\hdots, G^q_{k\binom{m}{q}}\right\}$ be set of all induced subdigraphs of $G_k$ after removing any $q$ cut-vertices together. Also, let $d_{i1}, d_{i2},\hdots, d_{iq}$ be cut-indices, and $\alpha_{i1}-\lambda, \alpha_{i2}-\lambda,\hdots, \alpha_{iq}-\lambda$ be weights of loop of these removed $q$ cut-vertices to form $G^q_{ki}$. Let $S^q_{ki}$ denote summation of all $\psi$-summands of all possible $\mathcal{B}$-partition of $G^q_{ki}.$ Then, the permanent polynomial of $G_k$ is given by,
    $$\psi(G_k)=\sum_{q=0}^{m}\varphi(G^q_{k}), ~\text{where}~ \varphi(G^q_{k})=\sum_{i=1}^{\binom{m}{q}}\Big(L^q_{i} S^q_{ki}\Big) ~\text{and}~ L^q_{i}=\prod_{t=1}^{q}(\lambda-\alpha_{it})(d_{it}-1),\ \ \ L^0_{1}=1.$$ 
    
    \end{corollary}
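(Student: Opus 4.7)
The plan is to imitate the proof of Theorem \ref{matrixcp} essentially verbatim, with $\phi$ replaced by $\psi$ throughout, and with the characteristic-polynomial recurrence of Lemma \ref{reccurcp} replaced by its permanent analogue (Corollary \ref{reccurpp}). The reason this transcription works is that the only place in the proof of Theorem \ref{matrixcp} where the specific nature of the determinant is invoked is in the application of Lemma \ref{reccurcp}; the rest of the argument is purely combinatorial, about how $\mathcal{B}$-partitions of $G_{n+1}$ decompose into those of $G_n$ and of $B_{n+1}$ and about how a pendant block interacts with the cut-index bookkeeping. Since Corollary \ref{reccurpp} provides an identical three-term recurrence for $\psi$, the same bookkeeping carries over.

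Concretely, I would carry out the induction on the number of blocks $k$. For $k = 1$ the digraph $G_1$ has no cut-vertex, the only $\mathcal{B}$-partition is $G_1$ itself, so both sides equal $\psi(G_1)$. For $k = 2$ I would apply Corollary \ref{reccurpp} to $G_2 = B_1 \cup B_2$ meeting in the unique cut-vertex $v$ of loop-weight $\alpha_1$, obtaining
\[
\psi(G_2) = \psi(B_1)\psi(B_2\setminus v) + \psi(B_2)\psi(B_1\setminus v) + (\lambda-\alpha_1)\psi(B_1\setminus v)\psi(B_2\setminus v),
\]
and check directly that the first two terms are $\varphi(G_2^0)$ while the third is $\varphi(G_2^1)$, matching the corollary's formula.

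For the inductive step, assume the result for all digraphs with at most $n$ blocks and let $G_{n+1}$ have $n+1$ blocks. Pick a pendant block $B_{n+1}$ with cut-vertex $v$ of loop-weight $\alpha$ and set $G_n := G_{n+1}\setminus(B_{n+1}\setminus v)$, which has $n$ blocks. Corollary \ref{reccurpp} gives
\[
\psi(G_{n+1}) = \psi(G_n)\psi(B_{n+1}\setminus v) + \psi(G_n\setminus v)\psi(B_{n+1}) + (\lambda-\alpha)\psi(G_n\setminus v)\psi(B_{n+1}\setminus v).
\]
I would then split the claimed sum $\sum_q \varphi(G_{n+1}^q)$ according to whether $v$ is among the $q$ deleted cut-vertices or not, exactly as in Theorem \ref{matrixcp}. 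When $v$ is not deleted, the $\psi$-summands of $G_{n+1}^q$ containing $\psi(B_{n+1}\setminus v)$ come from the first term of the recurrence applied to $\mathcal{B}$-partitions of $G_n^q$ (inductive hypothesis), and those containing $\psi(B_{n+1})$ come from the second term applied to $G_n\setminus v$. When $v$ is deleted, the required extra factor $(\lambda-\alpha)(d_v-1)$ or $(\lambda-\alpha)\cdot 1$ (depending on whether $v$ was already a cut-vertex of $G_n$ or not) is produced by the third term together with the inductive formula for $G_n$ or $G_n\setminus v$; the arithmetic $d_v+1-1 = d_v$ versus the freshly-created cut-index $2-1 = 1$ is the same bookkeeping as in the proof of Theorem \ref{matrixcp}.

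The main obstacle, if any, is not conceptual but rather bookkeeping: one must check that every $\mathcal{B}$-partition of every $G_{n+1}^q$ is produced exactly once and with the correct multiplicative factor $L_i^q$ on both sides of the split. But this is precisely the bookkeeping already verified in Theorem \ref{matrixcp}, and since permanent Laplace expansion (Theorem \ref{Godsil}) differs from the determinant expansion only by the absence of signs $(-1)^{w(S,T)}$, and signs play no role in the combinatorial accounting used above, the argument transfers without modification. Hence the conclusion follows and the proof is complete.
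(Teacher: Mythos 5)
Your proposal is correct and follows exactly the route the paper intends: the paper's own proof of this corollary is simply the remark that it is ``similar to Theorem \ref{matrixcp},'' and your transcription --- replacing $\phi$ by $\psi$ and Lemma \ref{reccurcp} by Corollary \ref{reccurpp}, with the observation that the combinatorial bookkeeping of $\mathcal{B}$-partitions is sign-independent --- is precisely that argument spelled out.
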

    \begin{proof}
    Proof is similar to Theorem \ref{matrixcp}. 
    \end{proof}

    \begin{corollary}
    Let $G_k$ be a digraph having only one cut-vertex $v$ and $k$ blocks $B_1, B_2,\hdots, B_k$. Let $v$ has a loop of weight $\alpha$. Then,
    $$\phi(G_k)=\sum_{i=1}^{k}\Big(\phi(B_i)\prod_{j=1, j\ne i}^{k}\phi(B_j\setminus v)\Big)+(k-1)(\lambda-\alpha)\prod_{i=1}^{k}\phi(B_i\setminus v).$$
    \end{corollary}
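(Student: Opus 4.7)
The plan is to read the corollary off directly from Theorem \ref{matrixcp} by specializing to the case at hand. Here $m = 1$, the unique cut-vertex $v$ has cut-index $d_1 = k$ (since all $k$ blocks meet at $v$), and its loop weight is $\alpha_1 = \alpha$. The outer sum in the theorem therefore collapses to $q \in \{0,1\}$, and it suffices to compute $\varphi(G^0_k)$ and $\varphi(G^1_k)$ separately and add them.

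For $q=0$, no cut-vertex is removed, so $G^0_{k1} = G_k$ and $L^0_1 = 1$. The remaining task is to enumerate the $\mathcal{B}$-partitions of $G_k$. Since $v$ is the only cut-vertex, each $\mathcal{B}$-partition is determined by the single choice of which block $B_i$ is allowed to contain $v$; that is, $\hat{B}_i = B_i$ and $\hat{B}_j = B_j \setminus v$ for every $j \ne i$. This produces the $k$ partitions whose $\phi$-summands add up to $\sum_{i=1}^{k} \phi(B_i)\prod_{j \ne i} \phi(B_j \setminus v)$, matching the first term of the claim.

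For $q=1$, the only cut-vertex $v$ is deleted, giving $G^1_{k1} = G_k \setminus v$, which is the vertex-disjoint union of the induced subdigraphs $B_1 \setminus v, \ldots, B_k \setminus v$. This subdigraph admits a single $\mathcal{B}$-partition, namely $\hat{B}_i = B_i \setminus v$ for every $i$, so $S^1_{k1} = \prod_{i=1}^{k} \phi(B_i \setminus v)$. The prefactor is $L^1_1 = (\lambda - \alpha)(d_1 - 1) = (\lambda - \alpha)(k-1)$. Adding $\varphi(G^0_k) + \varphi(G^1_k)$ then yields the stated identity.

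The only delicate point is the uniqueness of the $\mathcal{B}$-partition of $G_k \setminus v$; I would justify this by noting that once $v$ is removed the $B_i \setminus v$ are already vertex-disjoint, so by Definition \ref{def1} the partition is completely forced. Everything else is a routine specialization of Theorem \ref{matrixcp}, and no genuine obstacle remains. As a sanity check, one could alternatively prove the corollary by induction on $k$ using Lemma \ref{reccurcp} with any block as the pendant block, but the direct specialization above is cleaner and more in line with the flow of the paper.
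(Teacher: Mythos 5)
Your proposal is correct and follows the same route as the paper: the paper also derives the corollary by specializing Theorem \ref{matrixcp}, with the first term coming from the $q=0$ case (the $k$ $\mathcal{B}$-partitions determined by which block retains $v$) and the second from the $q=1$ case with prefactor $(\lambda-\alpha)(k-1)$ since $v$ has cut-index $k$. Your write-up simply spells out the enumeration of $\mathcal{B}$-partitions in more detail than the paper does.
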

    \begin{proof}
    The proof follows from theorem \ref{matrixcp}. On the right hand side, first term corresponds to $\phi$-summands of $\mathcal{B}$-partition when cut-vertex $v$ is not removed. The second term corresponds to $\phi$-summands of $\mathcal{B}$-partition when cut-vertex $v$ is removed. Cut-index of $v$ is $k$. Hence, $k-1$ is multiplied to the weight $(\lambda-\alpha)$.  
    \end{proof}

    \begin{example}
    We have constructed the digraphs of matrices $M_1$ and $M_2$ in figure \ref{fig1}. Now, we calculate the characteristic polynomial of these matrices in terms of the characteristic polynomial of the induced subdigraphs in the blocks in it, applying Theorem \ref{matrixcp}. We express the characteristic polynomial in terms of $\phi$-summands.  Let $X$ be a set of indexes. The principal submatrix whose rows and column are indexed with elements in $X$ is denoted by $[X]$. 
    
    \begin{enumerate}
    \item 
    First we calculate the characteristic polynomial of $M_1$. Parts of the characteristic polynomial is listed below in terms of $\phi$-summands of the digraph $G(M_1)$.
    \begin{enumerate}
    \item 
    Without removing any cut-vertex we get the following part of $\phi(M_1)$.
    \begin{equation*}
    \begin{split}
    \phi[1,2,3]\phi[4,5,6]\phi[7]+\phi[1,2,3]\phi[4,5]\phi[6,7]+\phi[1,3]\phi[2,4,5,6]\phi[7]\\+\phi[1,3]\phi[2,4,5]\phi[6,7].
    \end{split}
    \end{equation*}
    
    \item 
    Recall that, the loop-weight of $v_2$ is $(5 -\lambda)$ and its cut-index is 2. Removing the cut-vertex $v_2$ we get the following part, 
    \begin{equation}
        (\lambda-5) \big( \phi[1,3]\phi[4,5,6]\phi[7] + \phi[1,3]\phi[4,5]\phi[6,7] \big).
    \end{equation}
    
    \item the loop-weight of $v_6$ is $(-4-\lambda)$ and its cut-index is 2. Removing the cut-vertex $v_6$ we get the following part, \begin{equation}
    (\lambda+4)\big(\phi[1,2,3]\phi[4,5]\phi[7]+\phi[1,3]\phi[2,4,5]\phi[7]\big).
    \end{equation}
    
    \item Removing the cut-vertices $v_2$, and $v_6$ we get the following part,\begin{equation}
    (\lambda-5)(\lambda+4)\phi[1,3]\phi[4,5]\phi[7].
    \end{equation}
    
    Adding (a),(b),(c) and (d) give $\phi(M_1).$ 
    \end{enumerate}

    Now, we calculate the characteristic polynomial of $M_2$. Parts of the characteristic polynomial is listed below in terms of $\phi$-summands of the digraph $G(M_2)$.
    \begin{enumerate}
    \item 
    Without removing any cut-vertex we get the following part of $\phi(M_2)$.
    \begin{equation*}
    \begin{split}
    \phi[1,2,3]\phi[4,5,6]\phi[7]\phi[8]+\phi[1,2,3]\phi[4,5]\phi[6,7]\phi[8]+\phi[1,3]\phi[2,4,5,6]\phi[7]\phi[8]+\phi[1,3]\phi[2,4,5]\phi[6,7]\phi[8]\\ +\phi[1,2,3]\phi[4,5]\phi[7]\phi[6,8]+\phi[1,3]\phi[2,4,5]\phi[7]\phi[4,8].
    \end{split}
    \end{equation*}
    
    \item 
    The loop-weight of $v_2$ is $(5-\lambda)$ and its cut-index is 2. Removing the cut-vertex $v_2$ we get the following part, 
    \begin{equation*}
    \begin{split}
    (\lambda-5)\phi[1,3]\phi[4,5,6]\phi[7]\phi[8]+(\lambda-5)\phi[1,3]\phi[4,5]\phi[6,7]\phi[8]
    \\+(\lambda-5)\phi[1,3]\phi[4,5]\phi[7]\phi[6,8].
    \end{split}
    \end{equation*}
    
    \item The loop-weight of $v_6$ is $(-4-\lambda)$ and its cut-index is 3. Removing the cut-vertex $v_6$ we get the following part, \begin{equation}
    2(\lambda+4)\phi[1,2,3]\phi[4,5]\phi[7]\phi[8]+2(\lambda+4)\phi[1,3]\phi[2,4,5]\phi[7]\phi[8].
    \end{equation}
    
    \item Removing the cut-vertices $v_2$, and $v_6$ we get the following part,\begin{equation}
    2(\lambda-5)(\lambda+4)\phi[1,3]\phi[4,5]\phi[7]\phi[8].
    \end{equation}
    
    Adding (a),(b),(c) and (d) give $\phi(M_2).$ 
    \end{enumerate}
    \end{enumerate}
    \end{example}

    \section{Determinant of digraphs}\label{detdi}
    Note that, the determinant of a matrix can be calculated from its characteristic polynomial by setting $\lambda = 0$. Thus, theorem \ref{matrixcp} is applicable for calculating determinant of a matrix corresponding to the digraph $G_k$ by setting $\lambda=0$, and all $\phi$-summands replaced by $\det$-summands [see definition \ref{def1}]. 
    
    \begin{corollary} \label{deter}
    Let $G_k$ be a digraph having no loops on its cut-vertices, then determinant of $G_k$ is given by sum of \emph{$\det$}-summands of all possible $\mathcal{B}$-partition of $G_k$. 
    \end{corollary}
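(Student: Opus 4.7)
The plan is to specialize Theorem \ref{matrixcp} by setting $\lambda = 0$. Since by definition $\phi(G_k) = \det(A - \lambda I)$, evaluating at $\lambda = 0$ produces exactly $\det(G_k)$. Moreover, setting $\lambda = 0$ in any $\phi$-summand $\prod_{i} \phi(\hat{B_i})$ gives $\prod_{i} \det(\hat{B_i})$, i.e., the corresponding $\det$-summand in the sense of Definition \ref{def1}. So the whole task reduces to tracking what happens to the outer sum in Theorem \ref{matrixcp} under this substitution.

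Next I would examine each coefficient $L^q_i = \prod_{t=1}^{q}(\lambda - \alpha_{it})(d_{it}-1)$ in the expansion. By hypothesis, every cut-vertex of $G_k$ is loop-free, so the loop-weight of each removed cut-vertex satisfies $\alpha_{it} = 0$. Consequently, at $\lambda = 0$, each factor $(\lambda - \alpha_{it})$ vanishes, and hence $L^q_i \big|_{\lambda = 0} = 0$ whenever $q \ge 1$. Thus every term in $\phi(G_k)|_{\lambda=0}$ that arises from deleting at least one cut-vertex drops out.

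Only the $q = 0$ contribution survives. There is a single induced subdigraph to consider, namely $G_k^0 = G_k$ itself, with $L^0_1 = 1$ by convention. Therefore
\[
\det(G_k) \;=\; \phi(G_k)\big|_{\lambda=0} \;=\; S^0_{k1}\big|_{\lambda=0},
\]
which, after replacing each $\phi$-summand by its $\det$-summand as noted above, is precisely the sum of $\det$-summands over all $\mathcal{B}$-partitions of $G_k$. The only thing to verify carefully is the convention handling (null subdigraphs contribute $1$ to both $\phi$- and $\det$-summands, so the substitution $\lambda = 0$ is compatible), which is immediate from Definition \ref{def1}. There is no substantive obstacle here; the result is a clean corollary of Theorem \ref{matrixcp} once one notices that loop-free cut-vertices kill every higher-order term.
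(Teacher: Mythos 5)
Your proposal is correct and follows exactly the paper's own argument: specialize Theorem \ref{matrixcp} at $\lambda=0$, observe that a loop-free cut-vertex has $\alpha_{it}=0$ so every factor $(\lambda-\alpha_{it})$ vanishes and kills all $q\ge 1$ terms, leaving only the $q=0$ sum of $\det$-summands. No differences worth noting.
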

    \begin{proof}
    As all the cut-vertices has no loops, all $\det$-summands of resulting induced diagraph after removing any cut-vertices get multiplied by zero. Thus, only those $\det$-summands contributes which are corresponding to $q=0$ . Hence, the result follows from theorem \ref{matrixcp}. 
    \end{proof}
    
    A digraph $G$ whose determinant is zero is called a singular digraph. The next corollary provides a number of singular simple graphs. For definition of complete graph, cycle graph, tree, and forest we refer \cite{west2001introduction, harary1969graph}.
    \begin{corollary}
    A simple graph $G$ is singular, if any of these followings holds:
    \begin{enumerate}
    \item 
    There is a pendant block $C_n$ with cut-vertex $v$ of $G$. Here, $C_n$ is a cyclic graph with $n=4r$, $r$ is a positive integer.
    \item 
    There are two pendant blocks $C_n$, and $C_m$ sharing a cut-vertex $v$ of $G$. Here, $C_n$, and $C_m$ are cyclic graphs, where $n$ and $m$ are even positive integers. 
    \item
    There is a singular tree with positive even number of vertices. It contains a cut-vertex $v$ of $G$.
    \item 
    There are two trees with $n_1$ and $n_2$ vertices which share a common cut-vertex $v$ of $G$. Here, both $n_1$,and $n_2$ are either even or both of them odd positive integers.
    \end{enumerate}
    \begin{proof}
    Below, we prove all these conditions separately.
    \begin{enumerate}
    \item 
    There are two types of $\det$-summands of $G$: one consists of determinant of $C_n$, and another consists of determinant of $C_n\setminus v$, that is a path of odd length. Now, determinants of $C_n$ and $C_n\setminus v$ are zero \cite{germina2010signed}.  Hence, the first condition follows.
    
    \item 
    In $\det$-summand of $G$ each term will either have determinants of $C_n$ and $C_m\setminus v$ or  determinants of $C_n\setminus v$ and $C_m$. Note that, $C_m\setminus v$, and $C_n\setminus v$ are paths of odd length which are singular \cite{germina2010signed}. Hence, result follows.
     
    \item 
    Applying the Lemma (\ref{recurrsubp}), we expand the determinant of $G$, with respect to the cut-vertex $v$. In this expression, every term has either determinant of tree or determinant of a forest with an odd number of vertices. A forest with an odd number of vertices is singular \cite{bapat2010graphs}. Hence the result follows.  
    \item In the determinant expansion of $G$ with respect to cut-vertex $v$ each term will have at least one term having the determinant of forest having an odd number of vertices. Hence, the result follows.

    \end{enumerate}
    \end{proof}

    \end{corollary}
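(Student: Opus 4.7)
The plan is to treat each of the four conditions separately, in each case reducing $\det(G)$ to a sum in which every surviving term carries a zero factor. For the cyclic cases I would work through Corollary \ref{deter}, which expresses $\det(G)$ as a sum of $\det$-summands over all $\mathcal{B}$-partitions of $G$ (valid because a simple graph has no loops at its cut-vertices). For the tree cases I would instead apply Lemma \ref{recurrsubp}, which with $\lambda=\alpha=0$ collapses to
\[
\det(G) = \det(H)\det(G\setminus H) + \det(H\setminus v)\det\bigl(G\setminus(H\setminus v)\bigr)
\]
for a suitable subdigraph $H$ containing $v$. The ingredients to invoke are: $\det(C_{4r})=0$; $\det(P_n)=0$ when $n$ is odd; and any tree or forest with an odd number of vertices (or a component of odd order) has no perfect matching and therefore determinant zero.

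For case 1, fix the pendant cycle $C_{4r}$ and its cut-vertex $v$. In every $\mathcal{B}$-partition of $G$, the vertex $v$ is assigned to exactly one block containing it, so the $C_{4r}$-piece is either $C_{4r}$ itself or $C_{4r}\setminus v = P_{4r-1}$; the former has determinant zero since $4\mid 4r$, the latter since $4r-1$ is odd. Hence every $\det$-summand vanishes and Corollary \ref{deter} gives $\det(G)=0$. Case 2 is analogous: routing $v$ to any single block leaves at least one of $C_n, C_m$ contributing $P_{n-1}$ or $P_{m-1}$, a path on an odd number of vertices (since $n$ and $m$ are even), so again every summand carries a zero factor.

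For case 3, take $H = T$ in Lemma \ref{recurrsubp}: the first term vanishes because $\det(T)=0$ by hypothesis, and the second because $T\setminus v$ is a forest on $n-1$ (odd) vertices. For case 4, set $H = T_1 \cup T_2$, itself a tree on $n_1+n_2-1$ vertices; this count is odd in both sub-cases, so $\det(H)=0$. When both $n_i$ are even, each $T_i\setminus v$ is a forest on $n_i-1$ (odd) vertices, so $\det(H\setminus v)=0$ as well and both terms vanish.

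The main obstacle is the odd/odd sub-case of case 4, in which $\det(H\setminus v)$ is not forced to vanish because each $T_i\setminus v$ now has $n_i-1$ (even) vertices. To close this, I would apply Lemma \ref{recurrsubp} twice: first with $H=T_1$, using $\det(T_1)=0$, to reduce $\det(G)$ to $\det(T_1\setminus v)\det\bigl(G\setminus(T_1\setminus v)\bigr)$; then inside $G\setminus(T_1\setminus v)$ with $H=T_2$, using $\det(T_2)=0$, to factor it further as $\det(T_1\setminus v)\det(T_2\setminus v)\det\bigl(G\setminus((T_1\cup T_2)\setminus v)\bigr)$. One then needs to argue, from the hypothesis on $v$, that the residual subdigraph $G\setminus((T_1\cup T_2)\setminus v)$ has determinant zero because it inherits a component of odd order once $v$ is stripped; this final step is the only genuinely delicate point in the whole corollary.
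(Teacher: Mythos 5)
Your treatment of cases 1--3 and of the even--even half of case 4 is correct and is essentially the paper's own argument: Corollary \ref{deter} for the two cycle conditions (in every $\mathcal{B}$-partition the cycle contributes either $\det(C_n)=0$ or a path on an odd number of vertices, which is singular), and Lemma \ref{recurrsubp} with $H=T$, respectively $H=T_1\cup T_2$, for the tree conditions, where both surviving terms carry a forest of odd order. Your phrasing ``path on an odd number of vertices'' is in fact more accurate than the paper's ``path of odd length.''

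The gap you flag in the odd--odd half of case 4 is genuine, and it cannot be closed: the residual factor $\det\big(G\setminus((T_1\cup T_2)\setminus v)\big)$ need not vanish, and under the natural reading the statement itself fails in this sub-case. Take $G$ on seven vertices consisting of the two paths $a_2-a_1-v$ and $b_2-b_1-v$ (so $T_1,T_2$ are trees on $n_1=n_2=3$ vertices, both odd, sharing the cut-vertex $v$) together with a triangle on $v,c_1,c_2$. Your own two-step reduction gives
$$\det(G)=\det(P_2)\cdot\det(P_2)\cdot\det(K_3)=(-1)(-1)(2)=2\neq 0,$$
which one can confirm directly, since the only spanning basic figure of $G$ is the triangle together with the edges $a_1a_2$ and $b_1b_2$. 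The paper's one-sentence proof of case 4 asserts that every term contains the determinant of a forest with an odd number of vertices, and that is precisely what fails here: $(T_1\cup T_2)\setminus v$ has $n_1+n_2-2$ (even) vertices and may admit a perfect matching, while the part of $G$ left over after deleting it need not be singular. So your instinct that this is the delicate point is right, but the final step you defer is not merely delicate --- it is false in general. The odd--odd clause only survives under the degenerate reading $G=T_1\cup T_2$, in which case $G$ is a tree on $n_1+n_2-1$ (odd) vertices and the conclusion is immediate without any block machinery.
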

    \subsection{Determinant of Block Graph} \label{key}
    Complete graph on $n$ vertices is denoted by $K_n$ and its determinant is $(-1)^{n-1}(n-1)$. For a simple graph $G$, when each of its blocks is a complete graph then $G$ is called block graph \cite{bapat2014adjacency}. The determinant of a block graph can be calculated from the theorem below, quoted from \cite{bapat2014adjacency}. Interested readers may see a recurrence relation for block graph in \cite{247313}. Here, we present a simplified form of this theorem from the Corollary \ref{deter}. 
    \begin{theorem}
    Let $G_k$ be a block graph with $n$ vertices. Let $B_1, B_2,\hdots, B_k$ be its blocks of size $b_1, b_2, \hdots, b_k,$ respectively. Let $A$ be the adjacency matrix of $G$. Then $$\det (A)=(-1)^{n-k}\sum\prod_{i=1}^{k}(\alpha_i-1),$$ 
    where the summation is over all $k$-tuples $(\alpha_1, \alpha_2, \hdots,\alpha_k)$ of non negative integers satisfying the
    
    following conditions:
    \begin{enumerate}
    \item $\sum_{i=1}^{k} \alpha_i=n$;
    \item for any nonempty set $S\subseteq \left\{1,2,\hdots,k \right\}$ $$\sum_{i\in S}\alpha_i\le|V(G_S)|,$$ where $G_S$ denote the subgraph of $G$ induced by the blocks $B_i$, $i\in S$.
    \end{enumerate} 
    \end{theorem}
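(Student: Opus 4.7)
The plan is to apply Corollary~\ref{deter} directly to the block graph $G_k$ and then reinterpret the resulting sum of $\det$-summands as the stated sum over integer tuples. Since $G_k$ is a simple graph, none of its cut-vertices carries a loop, so Corollary~\ref{deter} gives
\[
\det(A) \;=\; \sum \prod_{i=1}^{k} \det(\hat{B}_i),
\]
where the outer sum runs over all $\mathcal{B}$-partitions $(\hat{B}_1,\ldots,\hat{B}_k)$ of $G_k$.

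Each block $B_i$ equals the complete graph $K_{b_i}$, so every induced subdigraph $\hat{B}_i$ is itself a complete graph $K_{\alpha_i}$ on $\alpha_i := |V(\hat{B}_i)|$ vertices. The standard identity $\det(K_n) = (-1)^{n-1}(n-1)$ holds for $n \ge 1$ and is consistent with the null-graph convention $\det(K_0) = 1$ of Definition~\ref{def1}, since $(-1)^{-1}(0-1) = 1$. Because the $\hat{B}_i$ partition $V(G_k)$ we have $\sum_i \alpha_i = n$, so the product collapses uniformly to
\[
\prod_{i=1}^{k}\det(\hat{B}_i) \;=\; (-1)^{\sum_i(\alpha_i - 1)} \prod_{i=1}^{k}(\alpha_i - 1) \;=\; (-1)^{n-k}\prod_{i=1}^{k}(\alpha_i-1),
\]
reducing the theorem to showing that the sum over $\mathcal{B}$-partitions can be rewritten as a sum over integer tuples $(\alpha_1,\ldots,\alpha_k)$ satisfying conditions~(1) and~(2).

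To that end I would establish a bijection between $\mathcal{B}$-partitions of $G_k$ and nonnegative integer tuples obeying the two conditions. Necessity of~(1) is the vertex-partition property, and necessity of~(2) follows because the $\hat{B}_i$ for $i \in S$ are pairwise disjoint subsets of $V(G_S)$. For the reverse direction I would induct along the block-cut tree: pick a pendant block -- say $B_k$ -- whose unique cut-vertex of $G_k$ in it is $v$, note that $\alpha_k \in \{b_k-1, b_k\}$ encodes whether $v \in \hat{B}_k$, and peel off $B_k \setminus v$ (and $v$ itself when $\alpha_k = b_k$) to obtain a strictly smaller block graph on which the induction hypothesis reconstructs the rest of the partition from $(\alpha_1,\ldots,\alpha_{k-1})$.

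The hardest step is verifying that the Hall-type inequality in condition~(2) is preserved -- in both directions -- under this reduction, so that tuples feasible for the smaller block graph correspond exactly to tuples feasible for $G_k$ with a prescribed value of $\alpha_k$. This is essentially a feasibility check on how cut-vertex assignments propagate along the block-cut tree. Once the bijection is in place, summing the identity of the second paragraph over all admissible tuples produces the announced formula
\[
\det(A) \;=\; (-1)^{n-k}\sum \prod_{i=1}^{k}(\alpha_i-1).
\]
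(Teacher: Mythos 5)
Your proposal follows essentially the same route as the paper: reduce to Corollary \ref{deter}, use $\det(K_{\alpha})=(-1)^{\alpha-1}(\alpha-1)$ together with $\sum_i\alpha_i=n$ to extract the sign $(-1)^{n-k}$, and then establish a bijection between $\mathcal{B}$-partitions and feasible tuples by induction on a pendant block. The one step you flag but do not carry out in full --- that the inequality in condition (2) is preserved in both directions when a pendant block is peeled off --- is exactly the step the paper handles by the same pendant-block induction (phrased as building the graph up block by block rather than peeling it down), and at a comparable level of detail.
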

    
    \begin{proof}
    From the Corollary \ref{deter} determinant of block graph is equal to 
    $$\sum \prod_{i=1}^{k}\det(\hat{B_i}),$$ 
    where $\hat{B_i}$ is subgraph of $B_i$ and summation is over all $\mathcal{B}$-partition of $G_k$.
    
    Contribution of the $\det$-summand of a $\mathcal{B}$-partition with induced subgraphs $\hat{B_1}, \hat{B_2},\hdots, \hat{B_k}$ of sizes $\beta_1, \beta_2,\hdots, \beta_k$, respectively, is following, $$\prod_{i=1}^{k}(-1)^{\beta_i-1}(\beta_i-1).$$   
    As $(\hat{B_1}, \hat{B_2}, \hdots,\hat{B_k})$ are vertex disjoint induced subgraphs which partition $G_k$, $\sum_{i=1}^{k}\beta_i=n$. Thus, contribution of the above tuple can be written as, 
    $$(-1)^{n-k}\prod_{i=1}^{k}(\beta_i-1).$$
    
    Also, for any nonempty set $S\subset \left\{1,2,\hdots,k \right\}$,
    $$\sum_{i\in S}|V(\hat{B_i})|=\sum_{i\in S}\beta_i\le|V(G_S)|,$$ 
    where $G_S$ denote the subgraph of $G$ induced by the blocks $B_i$, $i\in S$. Thus, $k$-tuples  $(\beta_1, \beta_2,\hdots,\beta_k)$ resulted from $\mathcal{B}$-partitions of $G_k$ satisfy both the conditions of theorem.     
    
    Conversely, consider a $k$-tuple $(\alpha_1, \alpha_2, \hdots, \alpha_k)$ satisfying both the condition of theorem. We will prove by induction that each such $k$-tuple corresponds to a unique $\mathcal{B}$-partition of $G_k.$ 
    
    If $G_1$ has only one block $B_1$ of size $b_1$, then $G_1$ is a complete graph, $K_{b_1}$. The only possible choice for $1$-tuple is $\alpha_1 = b_1$. Clearly, $\alpha_1$ corresponds to a $\mathcal{B}$-partition which consists of $K_{b_1}$ only. Let $G_2$ has block $B_1$, and $ B_2$ of sizes $b_1$, and $b_2$, respectively.  The possible $2$-tuples are ($\alpha_1=b_1, \ \alpha_2=b_2-1$), and ($\alpha_1=b_1-1, \ \alpha_2=b_2$). Therefore, both the $2$-tuple induce possible two $\mathcal{B}$-partitions in $G_2.$ One $\mathcal{B}$-partition consists of induced subgraphs $K_{b_1}, K_{b_2-1}$. Another $\mathcal{B}$-partition consists of induced subgraphs $K_{b_1-1}, K_{b_2}$.
    
    Now we discuss the proof for $G_3$, which will clarify the reasoning for the general case. In $G_3$ block $B_3$ can occur in two ways.
    \begin{enumerate}
    \item 
    Let $B_3$ be added to a non cut-vertex of $G_2$. Without loss of generality, let $B_3$ is attached to a non-cut-vertex of $B_2$. Choices for $3$-tuple $(\alpha_1, \alpha_2, \alpha_3)$ are following: 
    \begin{enumerate}
    \item $\alpha_1=b_1,\ \alpha_2=b_2-1,\ \alpha_3=b_3-1$;
    \item $\alpha_1=b_1,\ \alpha_2=b_2-2,\ \alpha_3=b_3$;
    \item $\alpha_1=b_1-1,\ \alpha_2=b_2,\ \alpha_3=b_3-1$;
    \item $\alpha_1=b_1-1,\ \alpha_2=b_2-1,\ \alpha_3=b_3$.
    \end{enumerate}
    Note that, in this case, each 2-tuple of $G_2$ give rise to two 3-tuple in $G_3$ where $\alpha_1$ is unchanged. Clearly, all the tuples in $G_3$ can induce its all possible $\mathcal{B}$-partitions.  
    
    \item Let $B_3$ be added to cut-vertex $v$ of $G_2.$ Choices for $3$-tuple $(\alpha_1, \alpha_2, \alpha_3)$ are following: 
    \begin{enumerate}
    \item $\alpha_1=b_1,\ \alpha_2=b_2-1,\ \alpha_3=b_3-1$;
    \item $\alpha_1=b_1-1,\ \alpha_2=b_2,\ \alpha_3=b_3-1$;
    \item $\alpha_1=b_1-1,\ \alpha_2=b_2-1,\ \alpha_3=b_3$.
    \end{enumerate} 
    Here, each 2-tuple of $G_2$ give rise to a 3-tuple of $G_3$ where $\alpha_1, \alpha_2$ are unchanged and $\alpha_3=b_3-1$. Beside these there is one more 3-tuple where $\alpha_1=b_1-1,\ \alpha_2=b_2-1, \alpha_3=b_3$. Clearly, all the tuples in $G_3$ can induce its all possible $\mathcal{B}$-partitions.   
    \end{enumerate}
    
    Now let us assume that all possible $m$-tuples $(\alpha_1,\alpha_2,\hdots,\alpha_m)$ in $G_m$ can induce all possible $\mathcal{B}$-partitions in it. We need to proof that all possible $(m+1)$-tuples $(\alpha_1,\alpha_2,\hdots,\alpha_m, \alpha_{m+1})$ in $G_{m+1}$ can induce its all possible $\mathcal{B}$-partitions in it. In $G_{m+1}$ block $B_{m+1}$ can occur in two ways.
    \begin{enumerate}
    \item Let $B_{m+1}$ be added to non cut-vertex of $G_m.$ 
    Each $m$-tuple $(\alpha_1,\alpha_2,\hdots,\alpha_m)$ of $G_k$ give rise to two $(m+1)$-tuple of $G_{m+1}$ where, $\alpha_1, \alpha_2,\hdots,\alpha_{m-1}$ are unchanged. In one such tuple $\alpha_m$ is also unchanged and $\alpha_{m+1}=b_{m+1}-1.$ In other tuple $\alpha_{m}$ is one less than the value it had earlier and $\alpha_{m+1}=b_{m+1}.$ Thus, $(m+1)$-tuples in $G_{m+1}$ can induce all its $\mathcal{B}$-partitions in $G_{k+1}.$ 
    
    \item Let $B_{k+1}$ be added to a cut-vertex $v$ of $G_k.$ 
     Each $k$-tuple of $G_k$ give rise to one $(k+1)$-tuple of $G_{k+1}$ where $\alpha_{k+1}=b_{k+1}-1.$
    Beside these there are also $(k+1)$-tuples where $\alpha_{k+1}=b_{k+1}$, along with $k$-tuples of $(G_k\setminus v)$. Clearly, all the tuples in $G_{k+1}$ can induce its $\mathcal{B}$-partitions. 
    \end{enumerate}
    Hence, there is one to one correspondence between partitions and $k$-tuples. 
    \end{proof}

    \section{Digraph with no cut-vertex}\label{future}
    A digraph $G$ with no cut-vertex is a block itself. Determinant of $G$ can be computed in terms of determinants of its subdigraphs which may contain blocks. Hence, we may further simplify calculation of determinant of $G$. Let $A_1$ be a matrix of order $(n-1)$, as well as $\mathbf{b}$, and $\mathbf{c}$ are a column vector, and a row vector of order $n-1$, respectively. With suitable rebelling of vertices in digraph $G(A)$ we may write,
    $$A=\begin{bmatrix}
    A_1 & \mathbf{b}\\ \mathbf{c}& d
    \end{bmatrix}.$$
    If $A_1$ is invertible then from Schur' complement for determinant \cite{bapat2014adjacency},
    $$\det (A)=\det (A_1) \det(d-\mathbf{c}A^{-1}_1\mathbf{b}).$$
    
    If $A_1$ is not invertible, and $d$ is non-zero then, $$\det (A)=d \times \det(A_1-\mathbf{b}\frac{1}{d}\mathbf{c}).$$
    
    If $A_1$ is not invertible, and $d$ is zero then, $$\det (A)= \det \begin{bmatrix}
    A_1 & \mathbf{b}\\ \mathbf{c}& 0
    \end{bmatrix}=\det \begin{bmatrix}
    A_1 & \mathbf{b}\\ \mathbf{c}& 1
    \end{bmatrix}-1\times \det( A_1)=\det(A_1-\mathbf{b}\mathbf{c}).$$
    
    In all the above cases, we observe that det($A$) is calculated with the corresponding digraph $G(A)$, in terms of a lower order matrices. If any of these digraphs of lower order matrices have blocks, we can apply results of previous sections. Thus, $\det (A)$ can be computed recursively. For an efficient recursive method, we should prefer a vertex $v$, such that, the number of blocks in $G\setminus v$ is maximum. Note that, in this context, the digraph $G\setminus v$ corresponds to $A_1$. Also, $\mathbf{b},d$ and $\mathbf{c},d$ form column vector, and row vector corresponding the to vertex $v$. This lead us to propose an open problem \ref{open} in the conclusion.
    
           \begin{figure}[h!]
                        \begin{center}
                          \includegraphics[width=.4\linewidth]{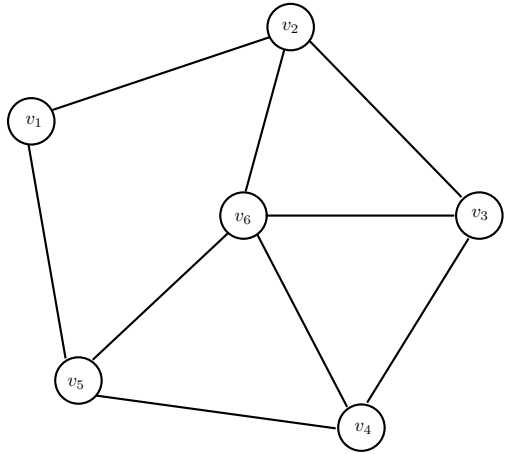}
                                  \end{center}
                        \caption{A digraph of order 6 having no cut-vertex}
                        \label{nos}
            \end{figure}

    \section{Conclusion} \label{con}
    In this article we found characteristic, and permanent polynomials of digraphs in terms of $\phi$-summands, and $\psi$-summands of corresponding $\mathcal{B}$-partition. Our result gave a simple proof for determinant of block graph. We conclude the paper with some open problems:
    \begin{enumerate}
    \item \label{open}
        Let $G$ be a digraph with no cut-vertex. Also, assume $b(G)$ denotes the number of blocks in $G$. Then find out a vertex $v$, such that, $b(G\setminus v)\geq b(G\setminus u)$ for all $u$ in $G$. Degree of a vertex $v$ is number of vertices adjacent to it. In most of the digraphs having no cut-vertex, removing a vertex having maximum degree introduce maximum number of blocks in the  resultant digraph. This observation is true for all digraphs of order less than $6$,\cite{bicon} but there are counterexamples in higher order. For example in digraph in figure \ref{nos}: vertex $v_2$ has degree 3, and vertex $v_6$ has degree 4. But, removing $v_2$ results in 2 blocks while removing $v_6$ results in 1 block. 

    \item
        Matrices are classified into a number of families possessing different properties. A rigorous study can be done on their characteristic polynomials with this combinatorial methods. It will provide us a combinatorial description of those particular properties.
    \item
        Finding singular graphs is an important problem due to their applications in quantum chemistry, and Huckel's molucular orbital theory. This article provides a new direction of research to find out singular graphs.
    \item
        The Schur's complement method is not applicable for finding out permanent of a matrix. It makes this problem complicated. Here we have calculated permanent for graphs with a number of cut vertiecs. It can be extended for further investigations for those diagraphs having no cut vertex.
    \item
        The Laplacian matrices of graphs provide a pictorial description of quantum states in the quantum mechanics and information theory. We may do the similar study where the Laplacian matrix is involved instead of graph adjacency matrix. It will be helpful to represent quantum states with weighted digraphs. 
    \end{enumerate}

    \bibliographystyle{plain}
    \bibliography{SB}

\begin{thebibliography}{10}

\bibitem{abdollahi2012determinants}
Alireza Abdollahi.
\newblock Determinants of adjacency matrices of graphs.
\newblock {\em Transactions on Combinatorics}, 1(4):9--16, 2012.

\bibitem{bapat2010graphs}
Ravindra~B Bapat.
\newblock {\em Graphs and matrices}.
\newblock Springer, 2010.

\bibitem{bapat2011note}
RB~Bapat.
\newblock A note on singular line graphs.
\newblock {\em Bull. Kerala Math. Assoc}, 8(2), 2011.

\bibitem{bapat1989order}
RB~Bapat and MI~Beg.
\newblock Order statistics for nonidentically distributed variables and
  permanents.
\newblock {\em Sankhy{\=a}: The Indian Journal of Statistics, Series A}, pages
  79--93, 1989.

\bibitem{bapat2014adjacency}
RB~Bapat and Souvik Roy.
\newblock On the adjacency matrix of a block graph.
\newblock {\em Linear and Multilinear Algebra}, 62(3):406--418, 2014.

\bibitem{bibak2013determinant}
Khodakhast Bibak.
\newblock On the determinant of bipartite graphs.
\newblock {\em Discrete Mathematics}, 313(21):2446--2450, 2013.

\bibitem{bibak2013determinants}
Khodakhast Bibak and Roberto Tauraso.
\newblock Determinants of grids, tori, cylinders and m{\"o}bius ladders.
\newblock {\em Discrete Mathematics}, 313(13):1436--1440, 2013.

\bibitem{farrell2000permanents}
EJ~Farrell, JW~Kennedy, and LV~Quintas.
\newblock Permanents and determinants of graphs: a cycle polynomial approach.
\newblock {\em Journal of Combinatorial Mathematics and Cominatorial
  Computing}, 32:129--138, 2000.

\bibitem{germina2010signed}
KA~Germina and K~Shahul~Hameed.
\newblock On signed paths, signed cycles and their energies.
\newblock {\em Appl. Math. Sci}, 4(70):3455--3466, 2010.

\bibitem{greenman1976graphs}
JV~Greenman.
\newblock Graphs and determinants.
\newblock {\em The Mathematical Gazette}, 60(414):241--246, 1976.

\bibitem{gutman2011nullity}
Ivan Gutman and Bojana Borovicanin.
\newblock Nullity of graphs: an updated survey.
\newblock {\em Zbornik Radova}, 14(22):137--154, 2011.

\bibitem{harary1962determinant}
Frank Harary.
\newblock The determinant of the adjacency matrix of a graph.
\newblock {\em SIAM Review}, 4(3):202--210, 1962.

\bibitem{harary1969determinants}
Frank Harary.
\newblock Determinants, permanents and bipartite graphs.
\newblock {\em Mathematics Magazine}, 42(3):146--148, 1969.

\bibitem{harary1969graph}
Frank Harary et~al.
\newblock Graph theory, 1969.

\bibitem{helton2009determinant}
J~William Helton, Igor Klep, and Raul Gomez.
\newblock Determinant expansions of signed matrices and of certain jacobians.
\newblock {\em SIAM Journal on Matrix Analysis and Applications},
  31(2):732--754, 2009.

\bibitem{hopcroft1971efficient}
John~E Hopcroft and Robert~E Tarjan.
\newblock Efficient algorithms for graph manipulation.
\newblock 1971.

\bibitem{247313}
Fedor~Petrov (http://mathoverflow.net/users/4312/fedor petrov).
\newblock Determinant of block graph.
\newblock MathOverflow.
\newblock URL:http://mathoverflow.net/q/247313 (version: 2016-08-12).

\bibitem{huang2012determinant}
Lingling Huang and Weigen Yan.
\newblock On the determinant of the adjacency matrix of a type of plane
  bipartite graphs.
\newblock {\em MATCH Commun. Math. Comput. Chem}, 68:931--938, 2012.

\bibitem{hwang2003permanents}
Suk-Geun Hwang and Xiao-Dong Zhang.
\newblock Permanents of graphs with cut vertices.
\newblock {\em Linear and Multilinear Algebra}, 51(4):393--404, 2003.

\bibitem{lee1994chemical}
Shyi-Long Lee and Chiuping Li.
\newblock Chemical signed graph theory.
\newblock {\em International journal of quantum chemistry}, 49(5):639--648,
  1994.

\bibitem{leskovec2010signed}
Jure Leskovec, Daniel Huttenlocher, and Jon Kleinberg.
\newblock Signed networks in social media.
\newblock In {\em Proceedings of the SIGCHI conference on human factors in
  computing systems}, pages 1361--1370. ACM, 2010.

\bibitem{minc1984permanents}
Henryk Minc.
\newblock {\em Permanents}.
\newblock Number~6. Cambridge University Press, 1984.

\bibitem{pragel2012determinants}
Daniel Pragel.
\newblock Determinants of box products of paths.
\newblock {\em Discrete Mathematics}, 312(10):1844--1847, 2012.

\bibitem{sciriha2007characterization}
Irene Sciriha.
\newblock A characterization of singular graphs.
\newblock {\em Electronic Journal of Linear Algebra}, 16(1):38, 2007.

\bibitem{valiant1979complexity}
Leslie~G Valiant.
\newblock The complexity of computing the permanent.
\newblock {\em Theoretical computer science}, 8(2):189--201, 1979.

\bibitem{von1957spektren}
Lothar Von~Collatz and Ulrich Sinogowitz.
\newblock Spektren endlicher grafen.
\newblock In {\em Abhandlungen aus dem Mathematischen Seminar der
  Universit{\"a}t Hamburg}, volume~21, pages 63--77. Springer, 1957.

\bibitem{wanless2005permanents}
Ian~M Wanless.
\newblock Permanents of matrices of signed ones.
\newblock {\em Linear and Multilinear Algebra}, 53(6):427--433, 2005.

\bibitem{wei2010matrix}
Tzu-Chieh Wei and Simone Severini.
\newblock Matrix permanent and quantum entanglement of permutation invariant
  states.
\newblock {\em Journal of Mathematical Physics}, 51(9):092203, 2010.

\bibitem{bicon}
Eric~W. Weisstein.
\newblock Biconnected graph.
\newblock "Biconnected Graph." From MathWorld--A Wolfram Web Resource.

\bibitem{west2001introduction}
Douglas~Brent West et~al.
\newblock {\em Introduction to graph theory}, volume~2.
\newblock Prentice hall Upper Saddle River, 2001.

\end{thebibliography}

    \section{Appendix}
       \begin{figure}[h!]
            \begin{center}
    \includegraphics[width=1\linewidth]{}
            \end{center}
              
            \caption{$\mathcal{B}$-partitions of digraph of matrix $M_1$}
            \label{allbm1}
            \end{figure}    
        
            \begin{figure}
                      \begin{center}
                      \includegraphics[width=1.1\linewidth]{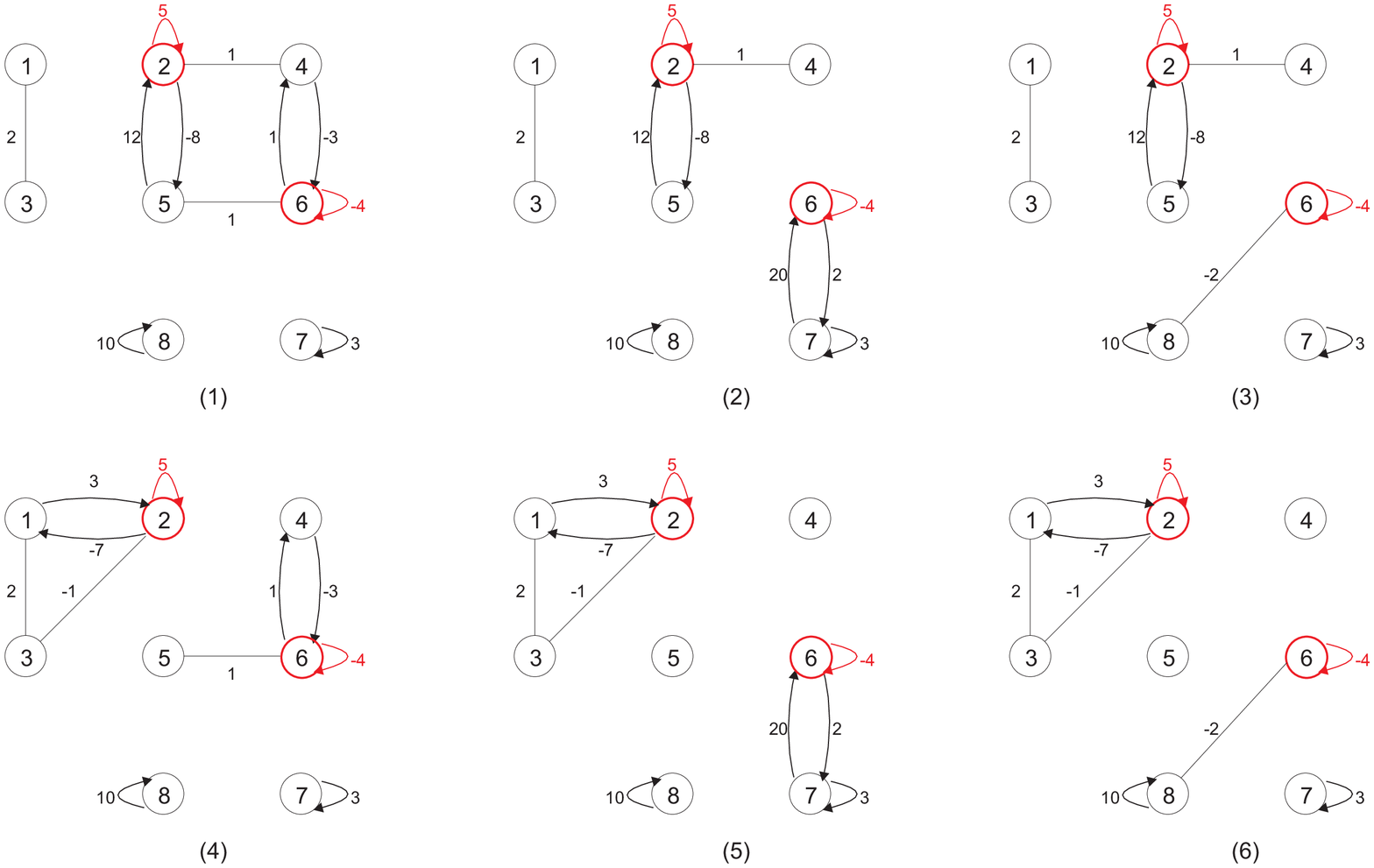}
                              \end{center}
                    \caption{$\mathcal{B}$-partitions of digraph of matrix $M_2$}
                    \label{allbm2}
        \end{figure}

\end{document}